\definecolor{winered}{rgb}{0.5,0.1,0.1}
\renewcommand*{\le}{\leqslant}
\renewcommand*{\leq}{\leqslant}
\renewcommand*{\ge}{\geqslant}
\renewcommand*{\geq}{\geqslant}
\renewcommand{\epsilon}{\varepsilon}
\crefname{table}{Table}{Tables}
\Crefname{table}{Table}{Tables}
\crefname{figure}{Figure}{Figures}
\crefname{theorem}{Theorem}{Theorems}
\crefname{definition}{Definition}{Definitions}
\crefname{corollary}{Corollary}{Corollaries}
\crefname{observation}{Observation}{Observations}
\crefname{question}{Question}{Question}
\crefname{lemma}{Lemma}{Lemmas}
\crefname{example}{Example}{Examples}
\crefname{reduction}{Reduction}{Reductions}
\crefname{construction}{Construction}{Constructions}
\crefname{subsection}{Section}{Sections}
\crefname{section}{Section}{Sections}
\crefname{proposition}{Proposition}{Propositions}
\crefname{algorithm}{Algorithm}{Algorithms}
\crefname{algocf}{Algorithm}{Algorithms}
\Crefname{equation}{Inequality}{Inequalities}
\crefname{lstlisting}{listing}{listings}
\newcommand{\myemph}[1]{{\color{winered}\emph{#1}}}
\newcommand{\naturals}{{{\mathbb{N}}}}
\newcommand{\feasibles}[1][]{{{\mathcal{F}_{#1}}}}
\newcommand{\reals}{{{\mathbb{R}}}}
\newcommand{\weight}{{{\mathfrak{w}}}}
\newcommand{\lquota}[1]{{{q^{\bot}_{#1}}}}
\newcommand{\uquota}[1]{{{q^{\top}_{#1}}}}
\newcommand{\pav}{{{\mathrm{PAV}}}}
\newcommand{\score}{{{\mathrm{score}}}}
\renewcommand{\part}{{{\mathrm{part}}}}
\renewcommand{\H}{{{\mathrm{H}}}}
\newcommand{\calR}{{{\mathcal{R}}}}
\theoremstyle{definition}
\newtheorem{definition}{Definition}
\newtheorem{example}{Example}
\newtheorem{remark}{Remark}
\theoremstyle{plain}
\newtheorem{theorem}{Theorem}
\newtheorem{lemma}[theorem]{Lemma}
\newtheorem{proposition}[theorem]{Proposition}
\newtheorem*{rep@theorem}{\rep@title}
\newcommand{\newreptheorem}[2]{%
\newenvironment{rep#1}[1]{%
 \def\rep@title{#2 \ref{##1}}%
 \begin{rep@theorem}}%
 {\end{rep@theorem}}}
\DeclareMathOperator*{\argmax}{argmax}
\begin{document}
\title{A Generalised Theory of Proportionality in Collective Decision Making}
\author{
  Tomáš Masařík\\
  University of Warsaw\\
  \href{mailto:masarik@mimuw.edu.pl}{masarik@mimuw.edu.pl}
  \and
  Grzegorz Pierczyński\\
  University of Warsaw\\
  \href{mailto:g.pierczynski@mimuw.edu.pl}{g.pierczynski@mimuw.edu.pl}
  \and
Piotr Skowron\\
  University of Warsaw\\
  \href{mailto:p.skowron@mimuw.edu.pl}{p.skowron@mimuw.edu.pl}
}
\date{}
	\maketitle

\begin{abstract}
We consider a voting model, where a number of candidates need to be selected subject to certain feasibility constraints. The model generalises committee elections (where there is a single constraint on the number of candidates that need to be selected), various elections with diversity constraints, the model of public decisions (where decisions needs to be taken on a number of independent issues), and the model of collective scheduling. A critical property of voting is that it should be fair---not only to individuals but also to groups of voters with similar opinions on the subject of the vote; in other words, the outcome of an election should proportionally reflect the voters' preferences.
We formulate axioms of proportionality in this general model. Our axioms do not require predefining groups of voters; to the contrary, we ensure that the opinion of every subset of voters whose preferences are cohesive-enough are taken into account to the extent that is proportional to the size of the subset. Our axioms generalise the strongest known satisfiable axioms for the more specific models. We explain how to adapt two prominent committee election rules, Proportional Approval Voting (PAV) and Phragm\'{e}n Sequential Rule, as well as the concept of stable-priceability to our general model. The two rules satisfy our proportionality axioms if and only if the feasibility constraints are matroids.
\end{abstract}

\lstset{
    keywords={input, output, for, while, if, else, return, break},
    comment=[l]{//},
    frame=single,
    mathescape=true,
    float,
    captionpos=b,
    numbers=left,
    breaklines=true,
}

\section{Introduction}

We consider a general voting scenario, where a subset of candidates needs to be selected based on the voters' preferences. The generality of this model comes from the fact that we do not consider specific types of elections, but rather assume we are given \myemph{feasibility constraints} as a part of an election. The constraints encode the type of the election by specifying which subsets of candidates can be elected. For example, if the goal is to select a fixed numer of candidates, say $k$ of them, then the constraints would simply indicate that all $k$-element subsets of the candidates are feasible. Naturally, the model allows to incorporate additional diversity constraints that specify lower and upper bounds on the number of selected candidates from different demographic groups.

Yet, the general feasibility constraints give much more flexibility and allow us to capture considerably more complex scenarios, which at first might seem not to be about selecting subsets of candidates. For example, consider the setting of public decisions where we need to make decisions on a number of independent issues~\cite{conitzer2017fair,fre-kah-pen:variable_committee_elections,sko-gor:proportional_public_decisions,cha-goe-pet:seq-decision-making,lac:perpetual-voting,bulteau2021jr}. For each alternative we can introduce a candidate and the feasibility constraints would indicate that exactly one alternative needs to be selected on each issue. Similarly, consider a model where the voters provide partial orders over the candidates and the goal is to establish a ranking of the candidates~\cite{dwo-kum-nao-siv:c:rank-aggregation,proprank}. This can be also expressed in our model by introducing auxiliary candidates: for each pair of candidates, $c_i$ and $c_j$, an auxiliary candidate $c_{i, j}$ would indicate that $c_i$ is ranked before $c_j$ (either in the resulting ranking or in the voters' ballots). Our model also captures committee elections with negative votes~\cite{bau-den:mav_trichotomous,zhou2019parameterized} and judgement aggregation~\cite{lis-pol:judgment-aggregation,End15JA}---we explain this in more detail in \Cref{sec:model}.

While the aforementioned types of elections might appear very different, certain common high-level principles apply to all of them.
In particular, in most scenarios it is of utmost importance to ensure that the outcomes of elections are fair---not only to individuals but also to groups of voters with similar views. Indeed, fair elections provide equal opportunities for underrepresented groups to engage in the process of decision-making, and lead to more inclusive and accountable decisions. Fairness has also a positive effect on participation and enhances the legitimacy of the elected candidates.\footnote{Fairness is also critically important in elections that do not involve humans. For example, proportional election rules are used for selecting validators in the blockchain protocol~\cite{cevallos2020validator} (proportionality is important to provide resilience against coordinated attacks of malicious users) or for improving the quality of genetic algorithms~\cite{fal-saw-sch-smo:c:multiwinner-genetic-algorithms}.} Accordingly, group-fairness in elections is the central focus of this paper.\footnote{While there is a large literature on individual fairness in social choice, it mainly deals with the problem of fair allocation of private goods~\cite{Thomson16:fair-allocation,BouveretCM16:fair-allocation}. In the context of voting, group-fairness is a more compelling concept. Indeed, the selected candidates are typically not assigned to particular individuals, but they rather represent (and are selected based on votes of) whole groups of voters~\cite{BaYo82a,Puke14a,FSST-trends}.}

There are at least three major ways in which one can implement group-fairness:
\begin{enumerate}
  \item One can additionally collect detailed socio-demographic information about the voters and use this information when computing winners of the election~\cite{pes-shm:algo-fairness}. This approach has severe drawbacks---it elicits sensitive information, and requires an in-depth a priori knowledge on what part of personal data is important for guaranteeing fairness.
    This is \emph{not} the approach that we take in this paper. We will still guarantee, though, that each socio-demographic group has a proportional influence on the outcome, assuming that that the members of such groups vote similarly.

\item One can specify certain concrete diversity constraints~\cite{azi:committees-soft-constraints,conf/aaai/BredereckFILS18,conf/ijcai/CelisHV18}, i.e., constraints that enforce that the outcomes of elections have certain structure, such as gender equality. Our model provides the possibility of adding such constraints. However, this is not sufficient, since the diversity constraints are typically fixed, and do not depend on the voters' preferences.

\item Finally, we can ensure proportional aggregation of the voters' preferences. Here, the main idea is that each subset of voters who support similar candidates should be entitled to influence the decision to the extent proportional to the size of the subset. This approach has been pursued in recent years in the literature on committee elections~\cite{FSST-trends,lac-sko:multiwinner-book}, and in this paper we extend it to more general types of elections.
\end{enumerate}

Our main contribution is conceptual. We define several axioms of proportionality in the general model of elections with feasibility constraints. Our axioms differ in their strength, but they share the same intuitive explanation: if a group of voters has cohesive-enough preferences, then they should have the right to decide about a proportional part of the elected outcome. Our axioms generalize the strongest known properties from the literature on committee elections, namely fully justified representation (FJR)~\cite{pet-pie-sko:c:participatory-budgeting-cardinal}, extended justified representation (EJR)~\cite{justifiedRepresentation}, and proportional justified representation (PJR)~\cite{pjr17}. One of the main results of this paper says that our axioms are always satisfiable---the base axioms are satisfiable for all types of constraints, and the most demanding strengthening of these axioms are satisfiable for matroid constraints.

We further explain how to adapt two prominent committee election rules, Proportional Approval Voting (PAV) and Phragm\'{e}n Sequential Rule, to our general model. We provide a full characterisation explaining that the two rules satisfy the aforementioned proportionality axioms if and only if the feasibility constraints are matroids (\Cref{thm:pav-and_ejr,thm:phragmen-and_pjr}). We also adapt the concept of stable-priceability to the model with general constraints, and we prove that the solutions that are stable-priceable satisfy our strong notions of proportionality. Altogether, our results provide basic tools that allow to guarantee group-fairness in different types of elections and in the presence of different types of constraints.

\subsubsection*{Related Work}

We discuss most of the related work in the relevant parts of the paper. Here we mention yet another particularly pertinent line of research, even though it does not directly relate to our axioms. One of the strongest notion of group-fairness considered in the social-choice literature is the core~\cite{justifiedRepresentation}. Since for some types of elections the core might be empty, certain relaxations of the core are often considered, for example its approximate variants~\cite{FMS18,mav-mun-she:committees-with-constraints,cheng2019group,jiang2019approx,min-yah-wan:core-approx}. Among this literature the recent work of \citet{mav-mun-she:committees-with-constraints} is most closely related to our paper. This work alike considers the model with diversity constraints, yet the considered axioms are very different (we discuss this difference in detail in \Cref{sec:base_ejr_vs_restrained_ejr}).

\section{The Model}\label{sec:model}

For each natural number $t\in \naturals$ we set $[t]=\{1, 2, \ldots, t\}$, and we use the convention that $[0] = \emptyset$.

An election is a quadruple $E=(C, N, \feasibles, \mathcal{A})$, where $C=\{c_1, \ldots, c_m\}$ is a set of $m$ \myemph{candidates}, $N=\{1, 2, \ldots n\}$ is a set of $n$ \myemph{voters}, $\feasibles\subseteq 2^C$ is a nonempty family of \myemph{feasibility sets}, and $\mathcal{A}=(A_1,\ldots,A_n)$ is a collection of \myemph{approval ballots};  $A_i \subseteq C$ for each voter $i \in N$. Intuitively, a voter's approval ballot is a subset of candidates that the voter supports.
Analogously, for a candidate $c \in C$ by $N(c)$ we denote the set of voters that approve $c$, $N(c) = \{i \in N\colon c \in A_i\}$.
For each voter $i$ we define her utility from a subset $W \subseteq C$ as the number of candidates in $W$ that the voter approves, that is $u_i(W) = |A_i \cap W|$.
Intuitively, $u_i(W)$ quantifies the satisfaction that voter $i$ enjoys if the subset $W$ is selected.
In \Cref{sec:geenral-utility-functions} we additionally discuss more expressive types of ballots, and the corresponding more complex utility functions.

We say that a subset of candidates $W\subseteq C$ is \myemph{feasible} if $W\in \feasibles$.
A \myemph{selection rule} is a function $\calR$ that given an election returns a nonempty set of feasible outcomes.\footnote{We are typically interested in a single outcome, yet we allow for ties.}
Without loss of generality, we assume that $\feasibles$ is closed under inclusion, i.e., if $W \in \feasibles$ and $W' \subseteq W$, then $W' \in \feasibles$. Indeed, if $W' \notin \feasibles$, it could be completed to a feasible set, and the voters would enjoy at least as high utility from the completed set as from the original one. Accordingly, when defining feasibility constraints, we often indicate only the maximal sets and implicitly assume that all the subsets are also feasible.



\subsection*{Feasibility Constraints}\label{sec:feasibility_constraints}

Our framework generalizes several important models considered in the literature, in particular:
\begin{description}
\item[Committee elections.] Here, we assume the goal is to select a subset of candidates (called a committee) of a given fixed size $k$. Thus, the feasibility constraints are of the following form:
\begin{align*}
\feasibles = \left\{ W \subseteq C \colon |W| = k\right\} \text{.}
\end{align*}
The model of committee elections has been extensively studied in the literature; we refer to the book by \citet{lac-sko:multiwinner-book} and to the book chapter by \citet{FSST-trends}.
\item[Public Decisions.] Here, we assume that the set of candidates is divided into $z$ disjoint pairs $C = \bigcup_{r \in [z]} C_r$, $|C_r| = 2$ for each $r \in [z]$ and $C_r \cap C_s = \emptyset$ for all $r, s \in [z]$ with $r \neq s$. For each pair we must select a single candidate, thus, the feasibility constraints are given as:
\begin{align*}
\feasibles = \left\{ W \subset C \colon |W \cap C_r| = 1 \text{~for~each~} r \in [z]\right\} \text{.}
\end{align*}
Intuitively, each pair corresponds to an issue on which a binary decision needs to be made; one candidate in the pair corresponds to the ``yes''-decision, and the other one to the ``no''-decision. This model has been studied by \citet{fre-kah-pen:variable_committee_elections}, and \citet{sko-gor:proportional_public_decisions}. One particularly appealing application domain for this model is to support negotiations among groups of entities in order to establish a common policy (e.g., negotiations among political parties that want to form a governing coalition). A variant of this model, where for each issue $r$ more than two alternative options are available---$|C_r| \geq 2$, has been also considered in the literature~\cite{conitzer2017fair,bri-mar-pap-pet:proporitonlaity-on-independent-issues}.
\end{description}

In this paper we additionally introduce an intermediate model that is more specific than the general model with arbitrary feasibility constraints, yet still more expressive than the models of committee elections and public decisions. This model is similar to the one of committee elections with diversity constraints, as studied by \citet{conf/aaai/BredereckFILS18}, \citet{conf/ijcai/CelisHV18}, and \citet{azi:committees-soft-constraints}.
\begin{description}
\item[Committee elections with disjoint attributes.] Here, we assume that the set of candidates is divided into $z$ disjoint groups $C = \bigcup_{r \in [z]} C_r$, $C_r \cap C_s = \emptyset$ for $r, s \in [z]$, $r \neq s$. For each group $r \in [z]$ we are given two numbers: a lower and an upper quota, denoted respectively as $\lquota{r}$ and $\uquota{r}$. The goal is to select $k$ candidate so that the number of candidates selected from each set $C_r$ is between $\lquota{r}$ and $\uquota{r}$.
\begin{align*}
\feasibles = \left\{ W \subseteq C \colon |W| = k \text{~and~} \lquota{r} \leq |W \cap C_r| \leq \uquota{r} \text{~for~each~} r \in [z]\right\} \text{.}
\end{align*}
\end{description}

The feasibility constraints mentioned above are all special cases of a more general class of constraints with a matroid structure~\cite{oxl-matroids}. We prove this in \Cref{sec:matroid}.

\begin{definition}[Matroid constraints] The feasibility constraints are matroid if the following condition, called the \emph{exchange property}, is satisfied:\footnote{Formally, this means that the pair $(C, \feasibles)$ forms a matroid.}
\begin{enumerate}[label=(EP)]
    \item\label{cond:exchange-property} For each $X, Y \in \feasibles$ such that $|X| < |Y|$, there exists $c\in Y\setminus X$ such that $X\cup\{c\}\in \feasibles$.
\end{enumerate}
\end{definition}

Intuitively, in a matroid all the candidates carry the same weight in the constraints. If we can remove some two candidates to make space for some other candidate $c$, then it is sufficient to remove only one of these two candidates. This is formalised in the following lemma.

\begin{lemma}\label{lem:one-swap}
    For all elections with matroid constraints, all feasible sets $W\subseteq C$, $c\notin W$ and $W'\subseteq W$ ($W'\neq \emptyset$) such that $W\setminus W' \cup \{c\}\in \feasibles$ there exists $c'\in W'$ such that $(W\setminus\{c'\}) \cup\{c\}\in \feasibles$.
\end{lemma}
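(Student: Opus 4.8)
The plan is to build the independent set $X \coloneqq (W\setminus W')\cup\{c\}$ back up towards the size of $W$ by re-inserting elements of $W'$ one at a time, each insertion justified by the exchange property, and to stop once all but one element of $W'$ has been returned. Throughout I use that $W$ itself is feasible; this is the intended reading (the situation described informally above, where two or more candidates are swapped out of a feasible selection), and it is genuinely needed---for dependent $W$ the claim can fail already on a rank-one uniform matroid.

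First I would dispose of the degenerate case $|W'|=1$: writing $W'=\{c'\}$, the hypothesis $W\setminus W' \cup \{c\}\in\feasibles$ is literally the desired conclusion $(W\setminus\{c'\})\cup\{c\}\in\feasibles$. So assume $|W'|\geq 2$. Since $c\notin W$ and $W'\subseteq W$, a direct computation gives $W\setminus X = W'$ and $|X| = |W|-|W'|+1 < |W|$, so $X$ is strictly smaller than the feasible set $W$.

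Now I would iterate. Set $Z_0\coloneqq X$. As long as $|Z_j|<|W|$, both $Z_j$ and $W$ are feasible with $|Z_j|<|W|$, so the exchange property yields some $e_{j+1}\in W\setminus Z_j$ with $Z_j\cup\{e_{j+1}\}\in\feasibles$; put $Z_{j+1}\coloneqq Z_j\cup\{e_{j+1}\}$. The crucial invariant is that every augmenting element lies in $W'$: by induction $Z_j=(W\setminus W')\cup\{c\}\cup\{e_1,\dots,e_j\}$ with distinct $e_1,\dots,e_j\in W'$, whence $W\setminus Z_j = W'\setminus\{e_1,\dots,e_j\}\subseteq W'$ and so $e_{j+1}\in W'$. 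Each step raises the cardinality by one, so after exactly $|W'|-1$ steps we reach a feasible set $Z$ of size $|W|$ containing $c$ together with all of $W'$ except a single element $c'$. Then $Z=(W\setminus W')\cup(W'\setminus\{c'\})\cup\{c\}=(W\setminus\{c'\})\cup\{c\}\in\feasibles$, which is the conclusion.

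The only real obstacle is the bookkeeping: one must be sure that the augmenting elements never escape $W'$ (equivalently that $W\setminus Z_j$ stays inside $W'$) and that precisely one element of $W'$ remains uninserted when the cardinality first reaches $|W|$. Both are exactly the invariant above, so everything beyond it is a direct, repeated application of the exchange property.
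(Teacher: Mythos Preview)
Your proof is correct and follows essentially the same approach as the paper: both arguments repeatedly apply the exchange property to augment $X=(W\setminus W')\cup\{c\}$ with elements of $W'$ until only one element of $W'$ is missing, with the paper packaging this as a minimal-counterexample contradiction and you as an explicit iteration. Your remark that the lemma tacitly requires $W\in\feasibles$ is well taken---the paper's proof likewise sets $Y=W$ and invokes the exchange property on it, so the assumption is used there too even though the statement does not make it explicit.
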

\begin{proof}
Suppose that \ref{cond:exchange-property} is satisfied and the statement is violated. Consider $W$, $W'$ and $c$ witnessing this violation. Assume that the witness is chosen so that $|W'|$ is minimised (yet still $|W'| > 1$, as otherwise the only member of $W'$ would be clearly the required candidate). Now let us define sets $X, Y\in \feasibles$ as follows: $X = W\setminus W' \cup \{c\}$, $Y=W$. From the observation that $|W'| > 1$ we have that $|X| < |Y|$. Then by \ref{cond:exchange-property} we have that there exists $a\in Y \setminus X = W'$ such that $X\cup\{a\}\in \feasibles$. Therefore, after removing $a$ from $W'$ we would obtain a smaller witness of the violation of the lemma statement, a contradiction.
\end{proof}

While a large part of our results concerns matroids, our definitions also apply to computational social choice models that do not have a matroid structure. Below we give a few examples of such models that fit our general framework.

\begin{description}
\item[Ranking candidates.] Assume the goal is to find a ranking of the candidates~\cite{dwo-kum-nao-siv:c:rank-aggregation,proprank} instead of simply picking a subset. The ranking should reflect the preferences of the voters expressed over the individual candidates. This setting can be represented in our general model as follows. For each pair of candidates, $c_1$ and $c_2$, we introduce an auxiliary candidate $c_{1, 2}$. Intuitively, selecting  $c_{1, 2}$ would correspond to putting $c_1$ before $c_2$ in the returned ranking. The feasibility constraints would ensure that the selected auxiliary candidates correspond to a transitive, asymmetric, and complete relation on original candidates.\footnote{It remains to specify the voters' preferences over the auxiliary candidates. The most natural way is to construct an approval-based preference profile, and to assume that a voter approves $c_{1, 2}$ if she prefers candidate $c_1$ over $c_2$ in the original preference profile. This approach would be compatible with preference profiles consisting of weak partial orders.} The model also applies to collective scheduling~\cite{sko-rza-pas:collective_scheduling}---the candidates would correspond to jobs to be scheduled, and the  constraints allow to incorporate additional dependencies between the jobs.
\item[Committee elections with negative votes.] Consider a model where the voters are allowed to express negative feelings towards candidates~\cite{bau-den:mav_trichotomous,zhou2019parameterized}. This scenario can be modelled by introducing auxiliary candidates and adding appropriate constraints. For each candidate $c$ we can add a virtual candidate $\bar{c}$ which corresponds to not-selecting $c$. A voter approves $\bar{c}$ if she voted against $c$ in the original election. The feasibility constraints would ensure that we never select $c$ and $\bar{c}$ together.

\item[Judgment Aggregation.] Here the goal is to find a valuation of propositional variables that satisfies a given set of propositional formulas~\cite{lis-pol:judgment-aggregation,End15JA}. The valuation should take into account the opinions of the voters with respect to which of the variables should be set true, and which of them should be set false.  We can represent this setting by adding, for each propositional variable $x$, two candidates, $c_{x, \mathrm{T}}$ and $c_{x, \mathrm{F}}$, corresponding to setting the variable to true and to false, respectively. The propositional formulas can be incorporated as feasibility constraints.
\end{description}

In \Cref{sec:matroid} we give examples showing that the aforementioned constraints are not matroids.

Our model is also closely related to voting in combinatorial domains~\cite{lan-via:combinatorial-domains}.

\section{Definition of Proportionality}

In this section, we formulate our main definition that capture the idea of group fairness. We start by defining the base axiom of proportionality, called Base Extended Justified Representation (BEJR). This axiom already implies strongest notions of proportionality in the more specific models. Specifically, it implies Extended Justified Representation in the model of committee elections~\citep{justifiedRepresentation}, proportionality for cohesive groups in the model of public decisions~\cite{sko-gor:proportional_public_decisions} and strong EJR~\citet{cha-goe-pet:seq-decision-making} in the context of sequential decision making~\cite{lac:perpetual-voting}. It is always satisfiable, and has an intuitive interpretation.

Next, we compare our axiom to the recent definition of Restrained EJR by \citet{mav-mun-she:committees-with-constraints}. We show that in many natural settings---for example, in the case of demographic constraints---our axiom provides considerably stronger guarantees. There are, however cases, where Restrained EJR is not implied by our Base EJR. This discussion provides additional insights and leads to our main definition of EJR. The main definition of EJR is very similar to its base counterpart, thus all our intuitive explanations of the base EJR cary over. We believe it is most instructive to understand the base axiom first rather than to go directly to our main definition of EJR.

The main definition always exists for matroid constraints and can be satisfied by natural extensions of the known election rules. For non-matroid constraints the definition might be too strict---for example, it cannot be always satisfied together with Pareto optimality. This is in contrast to the case of base EJR, which never contradicts Pareto optimality.

\subsection{The Base Notion of Proporitonality}

Let us start by introducing the base variant of our main axiom. Next, we will provide its intuitive explanation and will show a few structural properties of the proposed definition.

\begin{definition}[Base Extended Justified Representation (BEJR)]\label{def:ejr}
  Given an election $E = (C, N, \feasibles, \mathcal{A})$ we say that a group of voters $S \subseteq N$ \myemph{deserves} $\ell$ candidates if for each feasible set $T \in \feasibles$ either there exists $X \subseteq \bigcap_{i\in S} A_i$ with $|X| \geq \ell$ such that $T \cup X \in \feasibles$, or the following inequality holds:
\begin{align}\label{eq:ejr_condition}
\frac{|S|}{n} > \frac{\ell}{|T| + \ell} \text{.}
\end{align}
We say that a feasible outcome $W \in \feasibles$ of an election $E = (C, N, \feasibles, \mathcal{A})$ satisfies \myemph{base extended justified representation (BEJR)} if for each $\ell \in \naturals$ and each group of voters $S\subseteq{}N$ that deserves $\ell$  candidates there exists a voter $i \in S$ that approves at least $\ell$ candidates in $W$, i.e., $u_i(W) \geq \ell$. \hfill $\lrcorner$
\end{definition}

Note that for $|S| \neq n$ and $T\neq \emptyset$ condition \eqref{eq:ejr_condition} can be equivalently written as:
\begin{align}\label{eq:ejr_condition2}
\frac{|S|}{n - |S|} > \frac{\ell}{|T|} \text{.}
\end{align}
The latter formulation might look a bit more intuitive, but we will mainly use the former one, since it does not require considering the case of division by zero separately.

Let us intuitively explain \Cref{def:ejr}. Consider a group of voters $S \subseteq N$ and let us have a closer look at the condition saying that this group deserves~$\ell$ candidates. Why giving $\ell$ candidates to $S$ can be possibly wrong?
The main reason is it may prohibit us from selecting candidates that are supported by other voters, namely the voters from $N \setminus S$. Consider a set $T$ that is supported by those from $N \setminus S$. If $T \cup X \in \feasibles$ then giving $\ell$ candidates to $S$ does not prohibit selecting $T$; we can safely give $\ell$ candidates to $S$ while satisfying the claim of the remaining voters.  If $T \cup X \notin \feasibles$ then we are in \eqref{eq:ejr_condition}; for the sake of this explanation consider the (almost equivalent) formulation given in \eqref{eq:ejr_condition2}. This formulation reads as follow: proportionally to its size the claim of the group $S$ to $\ell$ candidates is stronger than the claim of the remaining voters to set $T$. Thus, such set $T$ cannot be used as an evidence discouraging us from giving $\ell$ candidates to $S$.

Yet another equivalent formulation of the condition in \Cref{def:ejr} is the following. A nonempty group of voters $S \subseteq N$ deserves $\ell$ candidates if for each feasible set $T \in \feasibles$ with
\begin{align}\label{eq:ejr_condition3}
|T| < \ell \cdot \frac{n-|S|}{|S|}
\end{align}
there exists $X \subseteq \bigcap_{i\in S} A_i$ with $|X| \geq \ell$ such that $T \cup X \in \feasibles$. This condition intuitively reads as follows: $S$ deserves $\ell$ candidates if they can complete each reasonable suggestion of the other voters, $T$, with $\ell$ commonly approved candidates.


\begin{remark}
We note that in \Cref{def:ejr} we might w.l.o.g.\ assume that $X \in \feasibles$. Indeed, if $T \cup X \in \feasibles$, then in particular $X \in \feasibles$ since $\feasibles$ is closed under inclusion.  \hfill $\lrcorner$
\end{remark}
\begin{remark}
If a group of voters $S$ deserves $\ell$ candidates then in particular, there must exist a feasible set $X \subseteq \bigcap_{i\in S} A_i$ with $|X| \geq \ell$. This follows from the observation that for an empty set $T = \emptyset$ condition \eqref{eq:ejr_condition} is never satisfied.  \hfill $\lrcorner$
\end{remark}


Let us now illustrate our definition through a couple of examples. This will also provide intuition on why our definition generalises the analogous definitions in the more specific models.

\begin{example}\label{ex:commitee_elections}
Consider the model of committee elections with approval utilities. Assume the goal is to select a subset of $k = 10$ candidates, and consider a group $S$ consisting of $30\%$ of voters who jointly approve three candidates, $c_1, c_2$, and $c_3$.
Indeed, consider a set $T \subseteq C$, and observe that $X = \{c_1, c_2, c_3\}$ always satisfies the conditions from \Cref{def:ejr}. If $|T| \leq 7$ then $T \cup X \in \feasibles$. Otherwise,
\begin{align*}
\frac{3}{|T| + 3} \leq \frac{3}{8 + 3 } < \frac{3}{10} = \frac{|S|}{n} \text{.}
\end{align*}
Thus, BEJR implies that there exists a voter from $S$ that approves at least three out of ten selected candidates.\hfill $\lrcorner$
\end{example}

\begin{example}\label{ex:public-decisions}
Consider the model of public decisions, and a group $S$ of $30\%$ of voters who have the same opinion with respect to some $p$ issues. We will prove that this group deserves $\lfloor 0.3 \cdot p \rfloor$ candidates (here, decisions). Consider a set $T \in \feasibles$. If $|T| \leq p - \lfloor 0.3 \cdot p \rfloor$, then we can find $\lfloor 0.3 \cdot p \rfloor$ decisions that $S$ agrees on, and we can add them to $T$ so that the set is feasible. Otherwise, we get:
\begin{align*}
\frac{\lfloor 0.3 \cdot p \rfloor}{|T| + \lfloor 0.3 \cdot p \rfloor} < \frac{\lfloor 0.3 \cdot p \rfloor}{p - \lfloor 0.3 \cdot p \rfloor + \lfloor 0.3 \cdot p \rfloor} = \frac{\lfloor 0.3 \cdot p \rfloor}{p} \leq \frac{ 0.3 \cdot p}{p} = 0.3 \leq \frac{|S|}{n} \text{.}
\end{align*}
Thus, in both cases the conditions in \Cref{def:ejr} are satisfied.

This example also shows why we cannot treat the constraints separately, and why it is not enough to consider proportionality independently within each constraint. Indeed, if we did so, then in the model of public decisions would need to guarantee the proportionality with respect to each single decision only. In effect, a group of less than 50\% of voters would not be guaranteed to have any influence on the outcome, even if they agreed with respect to all the issues. \hfill $\lrcorner$
\end{example}

The same reasoning can be used to formally prove that  \Cref{def:ejr} generalises the classic definition of EJR from the literature on committee elections~\citep{justifiedRepresentation}, and that it corresponds to the definition of proportionality for cohesive groups in the model of public decisions (Definition 7 in~\cite{sko-gor:proportional_public_decisions}). Finally, our definition implies the axiom of strong EJR by~\citet{cha-goe-pet:seq-decision-making} in the context of sequential decision making~\cite{lac:perpetual-voting} (a weaker variant of strong EJR is strong PJR; this axiom has been also considered by \citet{bulteau2021jr}, but they used the name ``some periods intersection PJR'').
Let us consider yet another example.

\begin{example}
Consider the model of committee elections with disjoint attributes.
Assume that $z = 2$ and so $C = C_1 \cup C_2$. Assume that our goal is to select exactly 10 candidates from $C_1$ and exactly 20 candidates from $C_2$. Thus, $\lquota{1} = \uquota{1} = 10$, $\lquota{2} = \uquota{2} = 20$, and $k = 30$. Assume further that there are enough candidates in each set, e.g., $|C_1| = |C_2| = 100$.

Let $S$ consists of 41\% of all the voters, who jointly approve some 11 candidates from $C_2$. These voters deserve 8 candidates. Indeed, let $X$ be a set of 8 candidates jointly approved by $S$. If $|T| < 13$ then $T \cup X \in \feasibles$; otherwise:
\begin{align*}
\frac{8}{|T| + 8}  \leq \frac{8}{13 + 8} = \frac{8}{21} < \frac{|S|}{n} \text{.}
\end{align*}
Now assume that $S$ additionally approves 4 candidates from $C_1$. Then $S$ deserves 10 candidates. Indeed, consider two cases. If $|T \cap C_1| \leq 6$ then we can add to $X$ four candidates from $C_1$ without violating the constraints. In order to prevent adding 6 candidates from $C_2$, it must hold that $|T \cup C_2| \geq 15$. But then:
\begin{align*}
\frac{10}{|T| + 10}  \leq \frac{10}{25} < \frac{|S|}{n} \text{.}
\end{align*}
On the other hand, if $|T \cap C_1| > 6$ then we observe that it also must hold that $|T \cap C_2| > 11$ (as otherwise we could add to $X$ ten candidates from $C_2$), and so $|T| \geq 18$. Thus, also in this case the condition  \eqref{eq:ejr_condition} from \Cref{def:ejr} holds. \hfill $\lrcorner$
\end{example}

The main feature which makes our definitions powerful is that they are always satisfiable, independently of the specific types of constraints or voters' preferences.

\begin{theorem}\label{thm:ejr:existence}
For each election there exists an outcome satisfying Base EJR.
\end{theorem}

The proof of \Cref{thm:ejr:existence} follows from a more general result, namely from \Cref{thm:fjr:existence} in \Cref{sec:extensions}.

Finally, note that BEJR also implies that the average utility of the voters from the group $S$ is considerably high. This is already known in the context of committee elections~\cite{pjr17,skowron:prop-degree}, and below we generalise this result to matroid constraints.

\begin{proposition}\label{prop:ejr-average-sat}
Consider an election with matroid feasibility constraints, and let $W$ be an outcome satisfying BEJR. Then, for each group of voters $S$ deserving $\ell$ candidates, the average number of candidates from $W$ that the voters from $S$ approve is at least:
\begin{align}
  \frac{1}{|S|}\sum_{i \in S} |A_i \cap W| \geq \frac{\ell-1}{2} \text{.} \label{eq:prop3}
\end{align}
This estimation is tight up to the constant of one.
\end{proposition}
\begin{proof}
Consider a group of voters $S \subseteq N$ that deserves $\ell$ candidates.

Consider a group $S' \subseteq S$ with $|S'| \geq |S| - i \cdot \frac{|S|}{\ell}$ for some natural number $i \in [\ell]$. We will first show that $S'$ deserves $\ell - i$ candidates. Fix a feasible subset of candidates $T \in \feasibles$.  Let us remove~$i$ arbitrary candidates from $T$, and denote the so-obtained subset as $T'$; if $|T| < i$, then we simply set $T' = \emptyset$.  Let us consider two cases. First, assume that there exists $X \subseteq \bigcap_{i \in S} A_i$ of size $\ell$ such that $X \cup T' \in \feasibles$. Let $p = |T' \cap X|$. Clearly:
\begin{align*}
|X \cup T'| - |T| \geq \ell + |T'| - p - |T| \geq  \ell - p + |T| - i - |T| =  \ell - p - i \text{.}
\end{align*}
Then, by the exchange property \ref{cond:exchange-property} applied to $X \cup T'$ and $T$, we get that we can add $\ell - p - i$ candidates from $X$ to $T$ and the so obtained set would be feasible. Consequently, there exist a set $X'\subseteq X$ of size $\ell - i$ such that $X' \cup T \in \feasibles$ and so the condition from \Cref{def:ejr} is satisfied for $S'$.

Second, assume that for each $X \subseteq \bigcap_{i \in S} A_i$ of size $\ell$ we have $X \cup T' \notin \feasibles$. Then, since $S$ deserves $\ell$ candidates, we get that:
\begin{align*}
\frac{|S|}{n} > \frac{\ell}{|T'| + \ell} \text{.}
\end{align*}
This, in particular means that $T' \neq \emptyset$ and so $|T'| = |T| - i$. Consequently:
\begin{align*}
\frac{|S'|}{n} \geq \frac{|S| }{n} \cdot \left(1 - \frac{i}{\ell}\right) > \frac{\ell}{|T'| + \ell}\cdot \frac{\ell - i}{\ell} =  \frac{\ell - i}{|T'| + \ell} = \frac{\ell - i}{|T| + \ell - i} \text{.}
\end{align*}
This again shows that the condition from \Cref{def:ejr} is satisfied for $S'$.

Thus, by EJR we know that there must exist a voter $v_1$ who approves at least $\ell$ candidates in the outcome $W$. We can apply EJR to $S \setminus \{v_1\}$ and infer that there exists a voter $v_2$ with a given number of approved candidates in $W$, and so on. Altogether, we get that at least one voter approves $\ell$ candidates at least $\left\lfloor\frac{|S|}{\ell} \right\rfloor$ voters approve $\ell - 1$ candidates, and so on. Thus:
\begin{align*}
\frac{1}{|S|}\sum_{i \in S} |A_i \cap W| &\geq \frac{1}{|S|}\cdot \left(\ell + \sum_{j = 1}^{\ell-1} \left\lfloor\frac{|S|}{\ell} \right\rfloor  \cdot j \right) \geq  \frac{1}{|S|}\cdot \left(\sum_{j = 1}^{\ell-1} \frac{|S|}{\ell}  \cdot j \right) \\
&= \frac{1}{\ell}\cdot \sum_{j = 1}^{\ell-1}j = \frac{1}{\ell} \cdot \frac{(\ell-1)\ell}{2} =  \frac{\ell-1}{2} \text{.}
\end{align*}

This estimation is tight even for committee elections. For example, it is known that the method of equal shares satisfies the axiom of extended justified representation~\cite{pet-sko:laminar} and that the average number of candidates approved by the voters from $S$ might be equal to $\frac{\ell+1}{2}$~\cite{lac-sko:multiwinner-book}.
\end{proof}

\subsection{Base EJR versus Restrained EJR}\label{sec:base_ejr_vs_restrained_ejr}

Recently, \citet{mav-mun-she:committees-with-constraints} have considered the model of committee elections with constraints. While they mainly focused on the notion of the core, they also proposed yet another variant of EJR that applies to the model with constraints. Our work has been done independently\cite{mas-pie-sko:group-fairness-arxiv}, yet it is important to compare the two definitions. First let us recall the definition by \citet{mav-mun-she:committees-with-constraints}. We slightly simplified the original definition---this was possible, because we assume the feasibility constraints are closed under inclusion.\footnote{The assumption that the feasibility constraints are closed under inclusion is very mild. In fact, it would be only restraining if we assumed that some candidates can generate negative utilities. This, however, is not the case for dichotomous utilities nor for general monotone utility functions that we consider in \Cref{sec:geenral-utility-functions}.}

\begin{definition}[Restrained EJR for Approval Utilities~{\cite[Definition 3.1]{mav-mun-she:committees-with-constraints}}]\label{def:restrainedEJR}
  Let $k$ be the maximum size of feasible outcomes, $ k = \max_{W \in \feasibles}|W|$.
  Let $k' = \left\lfloor \frac{|S|}{n} k\right\rfloor$ be the \emph{endowment} of $S\subseteq N$.
  We say that $S\subseteq N$ with endowment $k'$ is a \emph{blocking coalition} for some $W\in \feasibles$ if it satisfies the following:
  For all feasible outcomes $\hat{W} \subseteq W$ with $|\hat{W}| \leq k - k'$, there exists $W'$ with $|W'| \leq k'$ such that:
\begin{enumerate}
\item $T = \hat{W} \cup W' \in \feasibles$, and
\item $\left|\bigcap_{i\in S} A_i \cap T\right| \geq \max_{i \in S} u_i(W) + 1$.
\end{enumerate}
  An outcome $W \in \feasibles$ satisfies \myemph{restrained EJR} if there is no blocking coalition of voters $S \subseteq N$.\hfill $\lrcorner$
\end{definition}

We first complement the work of \citet{mav-mun-she:committees-with-constraints} by showing two important results concerning the satisfiability of the axiom. The proofs of the two theorems are given in \Cref{appx:restrained-ejr}.

\begin{theorem}\label{thm:restrained-ejr-exists}
For each election there exists an outcome satisfying Restrained EJR.
\end{theorem}

\begin{theorem}\label{thm:restrained-ejr-and-pareto}
There exists no selection rule that satisfies Restrained EJR and Pareto Optimality.
\end{theorem}

For matroid constraints Restrained EJR can be satisfied together with Pareto Optimality---this follows from \Cref{thm:pav-and_ejr} in \Cref{sec:pav}. This suggests that Restrained EJR might be better suited to the special case of matroid constraints. The axiom of Base EJR, on the other hand, does not exclude Pareto optimality even in the most general variant of the model. 

In fact there are even more substantial differences between Base and Restrained EJR, which we illustrate through the following examples.

\begin{example}[Restrained EJR fails intuitive proportionality for demographic constraints]\label{ex:restrained-ejr-weak}
Consider the model of committee elections with disjoint attributes. Assume our goal is to select 100 candidates, 50 of which are men and 50 are women. Thus, $z = 2$, $C = C_1 \cup C_2$ (for example, $C_1$ consists of men and $C_2$ of women), $k = 100$, $\lquota{1} = \uquota{1} = \lquota{2} = \uquota{2} = 50$. Assume there is a group $S$ consisting of 50\% of voters who vote for some $50$ candidates from $C_1$, say for the candidates $m_1, \ldots, m_{50}$. The remaining 50\% of voters vote for some other 50 candidates from $C_1$ ($m_{51}, ..., m_{100}$). Then the committee consisting of $m_1, ..., m_{50}$ and some arbitrary 50 candidates from $C_2$ satisfies Restrained EJR. Indeed, the group $S$ is not a blocking coalition because for $\hat{W} = \{m_1, ..., m_{50}\}$ no candidates approved by $S$ can be added to $\hat{W}$. At the same time, Base EJR implies that at least half of the candidates approved by $S$ must be selected. \hfill $\lrcorner$
\end{example}

We believe that \Cref{ex:restrained-ejr-weak} illustrates a serious limitation of Restrained EJR, since elections with such simple demographic constraints are perhaps the most natural applications of the general model with constraints. At the same time, there are cases where Restrained EJR provides stronger guarantees than Base EJR, as illustrated bellow.

\begin{example}\label{ex:base-ejr-weak}
Once again, consider a model of committee elections with two disjoint attributes. Thus, $z = 2$, and $C = C_1 \cup C_2$. We need to select 20 candidates in total, at most 14 from each set, thus $k = 20$, $\lquota{1} = \lquota{2} = 0$, and $\uquota{1} = \uquota{2} = 14$. The group $S$ of 50\% of voters approves 10 candidates from $C_1$---let us call them $a_1, ..., a_{10}$. The remaining voters approve some other 3 candidates from $C_1$ (call them $a_{11}$, $a_{12}$, $a_{13}$) and some 10 candidates from $C_2$ (say, $b_1$, \ldots, $b_{10}$).
Consider the committee $W = \{a_1, ..., a_7, a_{11}, ..., a_{13}, b_{1}, ..., b_{10}\}$. This committee satisfies Based EJR but fails Restrained EJR; indeed $S$ is a blocking coalition. A committee that satisfies Restrained EJR must contain all 10 candidates approved by $S$.
\end{example}

Let us have a closer look at \Cref{ex:base-ejr-weak}. We can see that the voters from $S$ and $N\setminus S$ disagree on which candidates from $C_1$ should be selected; there is no disagreement with respect to the candidates from $C_2$, since the voters from $S$ do not like any candidate there. This is why, according to Base EJR, the group $S$ is guaranteed half of the candidates from the set they care about, that is 7 candidates. Restrained EJR on the other hand guarantees that $S$ gets half of the whole outcome, that is 10 candidates. While both interpretations are reasonable, we believe that  in this case the stronger guarantee provided by Restrained EJR is more compelling. This motivates us to slightly strengthen the definition of Base EJR.

\subsection{The Main Definition of Proportionality}

We are now ready to introduce our main definition that strengthens Base and Restrained EJR.

\begin{definition}[Extended Justified Representation (EJR)]\label{def:main-ejr}
  Given an election $E = (C, N, \feasibles, \mathcal{A})$, and an outcome $W \in \feasibles$ we say that a group of voters $S \subseteq N$ deserves $\ell$ candidates in $W$ if for each set $T\subseteq W$ either there exists $X \subseteq \bigcap_{i\in S} A_i$ with $|X| \geq \ell$ such that $T \cup X \in \feasibles$, or the following inequality holds:
\begin{align}\label{eq:tejr_condition}
\frac{|S|}{n} > \frac{\ell}{|T| + \ell} \text{.}
\end{align}
We say that a feasible outcome $W \in \feasibles$ satisfies \myemph{extended justified representation (EJR)} if for each group of voters $S \subseteq N$ deserving $\ell$ candidates in $W$ there is a voter $i \in S$ such that $u_i(W) \geq \ell$. \hfill $\lrcorner$
\end{definition}

Intuitively, in the definition of EJR compared to Base EJR we consider only sets $T$ that are subsets of $W$. The justification of this change is the following: the sets $T$ that can prohibit us from giving $\ell$ candidates to $S$ (according to Base EJR) can consist of candidates that have very little support among the voters, and thus would not be selected anyway. This motivates focusing on sets $T$ that are being considered for selection, namely the sets contained in the outcome at hand, $W$. 

Another difference is that in Base EJR, the entitlement of a group of voters $S\subseteq N$ depends solely on the preferences of the voters from $S$. On the contrary, according to EJR the group might be guaranteed different number of candidates depending on what is the winning outcome, hence depending on the preferences of the remaining voters. We see it as a small advantage of Base EJR, as it allows each group of voters understand what influence on the outcome they will have irrespectively of the preferences of others.

\Cref{def:main-ejr} can be equivalently written in a way that more closely resembles Restrained EJR.

\begin{definition}[Extended Justified Representation (EJR); equivalent definition]\label{def:ejr:equiv}
  Given an election $E = (C, N, \feasibles, \mathcal{A})$, we say that a group of voters $S \subseteq N$ \myemph{deviates} in an outcome $W\in\feasibles$ if for $\ell= \max_{i \in S} u_i(W)+ 1$ and for each set $T\subseteq W$ either there exists $X \subseteq \bigcap_{i\in S} A_i$ with $|X| \geq \ell$ such that $T \cup X \in \feasibles$, or the following inequality holds:
\begin{align}\label{eq:tpjr_condition}
\frac{|S|}{n} > \frac{\ell}{|T| + \ell} \text{.}
\end{align}
We say that a feasible outcome $W \in \feasibles$ of an election $E = (C, N, \feasibles,\mathcal{A})$ satisfies extended justified representation (EJR) if no group of voters $S\subseteq N$ deviates in $W$.
\hfill $\lrcorner$
\end{definition}

Below we prove that EJR is indeed stronger than both Base EJR and Restrained EJR.

\begin{proposition}\label{prop:strongerEJR}
 Let $E = (C, N, \feasibles, \mathcal{A})$ be an election and let $W\in \feasibles$.
 If $W$ satisfies EJR then $W$ satisfies both Base EJR and Restrained EJR.
\end{proposition}
\begin{proof}
Consider an outcome $W$ that satisfies EJR. First, we show that $W$ satisfies Base EJR. Towards a contradiction suppose there is a group of voters $S\subseteq N$ that deserves $\ell$ candidates, but for all voters $i\in S$ we have $u_i(W)<\ell$. Let $\ell' = \max_{i \in S} u_i(W) + 1$.
Since $S$ deserves $\ell$ candidates we know that for each $T\in \feasibles$ either there exists $X \subseteq \bigcap_{i \in S} A_i$ with $|X|\ge \ell \geq \ell'$ such that $T\cup X \in \feasibles$, or
\begin{align*}
\frac{|S|}{n} > \frac{\ell}{|T| + \ell}\geq \frac{\ell'}{|T| + \ell'} \text{.}
\end{align*}
This in particular holds for every $T \subseteq W$. Thus, $S$ deviates in $W$, a contradiction.

 Second, we show that $W$ satisfies Restrained EJR. For the sake of contradiction, assume the opposite.
 Let $k$ be the maximum size of committee in $\feasibles$, and consider a blocking coalition $S$.
  Let $k' = \left\lfloor \frac{|S|}{n} k\right\rfloor$ and let $\ell = \max_{i \in S} u_i(W) + 1$. From the condition on the blocking coalition applied to $\hat{W} = \emptyset$ we infer that $\ell \leq k'$. We will prove that then $S$ deviates in $W$.
Consider a set $T \subseteq W$. If $|T| > k - k'$, then we have:
\begin{align*}
\frac{\ell}{\ell + |T|} \leq \frac{k'}{k' + |T|} < \frac{k'}{k' +  k - k'} = \frac{k'}{k} \leq \frac{|S|}{n} \text{.}
\end{align*}
Thus, Condition \eqref{eq:tejr_condition} is satisfied.
If $|T| \leq k - k'$ then  as $S$ is a blocking coalition, by setting $\hat{W} = T$ we infer that there exists $W'$ such that $\hat{W} \cup W' \in \feasibles$ and $|\bigcap_{i\in S} A_i \cap (\hat{W} \cup W')| \geq \ell$. Thus, there exists $X \subseteq \bigcap_{i\in S} A_i$ with $|X| \geq \ell$ such that $T \cup X \in \feasibles$. Consequently, $S$ deviates in $W$.
\end{proof}

It is an open problem whether there always exists an outcome satisfying EJR. However, we know that this axiom is always satisfiable for matroid constraints. For non-matroid constraints the axiom might contradict Pareto optimality (see \Cref{thm:restrained-ejr-and-pareto}), which suggests that considering the weaker variant of Base EJR is more appropriate in this case.

Finally, we note that an analogous result to \Cref{prop:ejr-average-sat} also holds for EJR, that is EJR implies high average utility for the groups of voters with cohesive preferences. 

\section{Proportional Approval Voting}\label{sec:pav}

In this section we consider a natural extension of Proportional Approval Voting (PAV) to our model with general constraints. We characterise the structure of feasibility constraints for which PAV satisfies our notions of proportionality. Thus, our characterisation precisely identifies the elections for which it is appropriate to use Proportional Approval Voting.

\begin{definition}[Proportional Approval Voting (PAV)]
  Given an election $E = (C, N, \feasibles,\mathcal{A})$, we define the PAV score of an outcome $W \subseteq C$ as:
\begin{align*}
\score_{\pav}(W) = \sum_{i \in N} \H(|W \cap A_i|), \qquad \text{where~} H(k) = \sum_{j = 1}^k \nicefrac{1}{j} \text{.}
\end{align*}
Proportional Approval Voting (PAV) selects a feasible outcome with the maximal PAV score. \hfill $\lrcorner$
\end{definition}

PAV has excellent properties pertaining to proportionality in the model of committee elections~\cite{justifiedRepresentation, AEHLSS18, pet-sko:laminar, BGPSW19, lac-sko:multiwinner-book} and public decision~\cite{sko-gor:proportional_public_decisions}. We will now prove that PAV exhibits good properties also when applied to the model with more general constraints---precisely, that PAV satisfies EJR if and only if the feasibility constraints have a matroid structure.
One potential drawback of PAV is that it is NP-hard to compute, even for committee elections~\cite{owaWinner,azi-gas-gud-mac-mat-wal:c:multiwinner-approval}. However, a polynomial-time local-search algorithm for PAV preserves its strong proportionality guarantees~\cite{AEHLSS18,kra-elk:local-search-PAV}. It will follow from our proofs that this is also the case for the more general case of matroid constraints.

\begin{theorem}\label{thm:pav-and_ejr}
PAV satisfies EJR for all elections with matroid constraints. For each non-matroid feasibility constraints there is an election where PAV fails Base EJR.
\end{theorem}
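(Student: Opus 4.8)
The plan is to prove the two directions separately. For the forward direction (matroid constraints imply PAV satisfies EJR) I would argue by contradiction: suppose $W$ is a $\score_{\pav}$-maximising outcome that fails EJR, witnessed by a group $S$ that deserves $\ell$ candidates while every voter $i\in S$ has $u_i(W)\le\ell-1$. First I would reduce to the case where $W$ is a basis of the matroid. Since $\feasibles$ is closed under inclusion and adding an approved candidate never decreases $\score_{\pav}$, any addable candidate that leaves the score unchanged is approved by nobody, so we may greedily complete $W$ to a maximal feasible set without lowering its score or altering any voter's utility. Applying the definition of ``deserving'' (\Cref{def:ejr}) to the feasible set $T=W$, the branch asserting a common set $X$ with $W\cup X\in\feasibles$ is impossible: as $W$ is a basis this would force $X\subseteq W$ and hence $u_i(W)\ge\ell$ for $i\in S$. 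Thus \eqref{eq:ejr_condition} must hold, i.e. $\tfrac{|S|}{n}>\tfrac{\ell}{|W|+\ell}$, equivalently $|S|\,|W|>\ell\,(n-|S|)$.

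Next I would exhibit an improving swap to contradict optimality. By the remark preceding the theorem, $S$ commonly approves a feasible set which, by downward closure, we may take to have size exactly $\ell$; call it $X$. Since $u_i(W)\le\ell-1$, not all of $X$ lies in $W$, so pick $c^{\ast}\in X\setminus W$. If $W\cup\{c^{\ast}\}\in\feasibles$, then adding $c^{\ast}$ raises $\score_{\pav}$ by at least $\sum_{i\in S}\tfrac{1}{u_i(W)+1}\ge|S|/\ell>0$, a contradiction; hence $W$ blocks every such $c^{\ast}$ and a genuine exchange is needed. Here \Cref{lem:one-swap} enters: taking $W'=W\setminus X$ (nonempty, since $W$ is a basis, as $W\subseteq X$ would force $W=X$ and $u_i(W)=\ell$) we have $(W\setminus W')\cup\{c^{\ast}\}=(W\cap X)\cup\{c^{\ast}\}\subseteq X\in\feasibles$, so the lemma supplies a partner $c'\in W\setminus X$ with $(W\setminus\{c'\})\cup\{c^{\ast}\}\in\feasibles$; more generally the valid partners of $c^{\ast}$ are exactly the circuit $C(c^{\ast},W)\setminus\{c^{\ast}\}$. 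The core computation then mirrors the committee case: summing the no-improvement inequalities $\score_{\pav}((W\setminus\{c'\})\cup\{c^{\ast}\})\le\score_{\pav}(W)$ over the valid partners $c'$, and using that each $i\in S$ approves $c^{\ast}$ but at most $\ell-1$ candidates of $W$, one isolates the marginal gains $\sum_{i\in S}\tfrac{1}{u_i(W)+1}\ge|S|/\ell$ against the marginal losses, which telescope to a count of at most $n-|S|$ satisfied voters outside $S$. Combined with $|S|\,|W|>\ell(n-|S|)$ this would give a strict improvement, contradicting optimality.

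The hard part is exactly this averaging step under matroid constraints. In the committee (uniform-matroid) case every candidate of $W$ is a valid swap partner of $c^{\ast}$, which gives each voter the large multiplicity $(k-u_i(W))$ needed to beat the bound $n-|S|$; in a general matroid the circuit $C(c^{\ast},W)$ can be small, so a single addition need not have enough partners. The crux is therefore to choose the addition(s) and the matched removals so that the exchange property \ref{cond:exchange-property} (via \Cref{lem:one-swap}, or a symmetric basis-exchange matching between $W$ and a basis extending $X$) certifies enough feasible swaps, while the losses charged to voters outside $S$ still telescope to at most $n-|S|$. I expect this bookkeeping---pairing commonly approved additions with removals not approved by $S$ and controlling the overlaps $|A_i\cap W|$---to be the main obstacle, and the place where the matroid hypothesis is genuinely used.

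For the converse I would turn a failure of the exchange property into an explicit bad election. If $\feasibles$ is not a matroid there are $X,Y\in\feasibles$ with $|X|<|Y|$ such that $X\cup\{c\}\notin\feasibles$ for every $c\in Y\setminus X$; passing to a subset of $Y$ I may assume $|Y|=|X|+1$. I would then design an electorate on these candidates in which a cohesive group $S$ commonly approves the candidates of $Y\setminus X$---so that, by the obstruction, they cannot all be added on top of an $X$-type selection---together with auxiliary voters approving $X\setminus Y$ whose multiplicities make the unique $\score_{\pav}$-maximiser retain the $X$-candidates and grant $S$ strictly fewer than the $\ell$ candidates it provably deserves under \Cref{def:ejr}. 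Verifying that $S$ deserves $\ell$ by checking \eqref{eq:ejr_condition} against all feasible $T$, while PAV leaves some voter of $S$ below $\ell$, completes the construction; calibrating the voter multiplicities so that PAV \emph{strictly} prefers the EJR-violating outcome is the delicate point here.
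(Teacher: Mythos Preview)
Your forward-direction outline identifies the right obstacle but does not overcome it, and the fix is not along the lines you sketch. You propose to fix a single commonly approved $c^{\ast}\notin W$ and sum the no-improvement inequalities over its circuit partners in $W$; as you yourself note, this circuit can be tiny, and the resulting bound is too weak to contradict anything. The paper's argument avoids fixing an added candidate altogether. It sets $A_S=\bigcap_{i\in S}A_i$, defines $W'=\{c\in W:\exists\,a_c\in A_S\setminus W,\ (W\setminus\{c\})\cup\{a_c\}\in\feasibles\}$, and then sums the no-improvement inequalities over \emph{all} $c\in W$: for $c\in W'$ it uses the swap $c\leftrightarrow a_c$, and for $c\notin W'$ it uses the trivial swap $c\leftrightarrow c$. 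The point of including the trivial swaps is that the total loss from all removals telescopes exactly to $n-|S_0|$ (each voter with $x>0$ representatives contributes $x\cdot\tfrac{1}{x}=1$), while the gains, counted only over $i\in S$, give at least $|S|\cdot\tfrac{|W'\setminus A_S|+|W\cap A_S|+1}{\ell}-|S_0|$. This yields the upper bound $\tfrac{|S|}{n}\le\tfrac{\ell}{|W'\setminus A_S|+|W\cap A_S|+1}$. Crucially, this is \emph{not} contradicted by applying the deserving condition at $T=W$ as you do; instead one must manufacture $T$ with $|T|+\ell=|W'\setminus A_S|+|W\cap A_S|+1$, obtained from $W'$ by deleting $W'\cap A_S$ and $\ell-1-|W\cap A_S|$ further candidates, and then use \ref{cond:exchange-property} together with \Cref{lem:one-swap} to show that no $X\subseteq A_S$ of size $\ell$ satisfies $T\cup X\in\feasibles$. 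The two bounds then clash. Your bound $|S|\,|W|>\ell(n-|S|)$ from $T=W$ is too coarse to match what any swap analysis can extract, and neither a single circuit nor a symmetric basis-exchange matching with a basis extending $X$ gives the right quantity $|W'\setminus A_S|+|W\cap A_S|$.

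For the converse your roles are reversed relative to what works. The paper picks a minimal witness $(X,Y)$ (first in $|X|$, then in $|Y\setminus X|$), sets $\ell=|X|$, and lets the \emph{large} cohesive group $S_1$ of size $\tfrac{\ell}{\ell+1}n+1$ approve exactly $X$, with the remaining voters $S_2$ approving $Y\setminus X$. Then $S_1$ deserves $\ell$ candidates, so EJR forces all of $X$ to be elected, whereupon no $c\in Y\setminus X$ is feasible; but minimality of the witness produces $c'\in X$ and $c_1,c_2\in Y\setminus X$ with $(X\setminus\{c'\})\cup\{c_1,c_2\}\in\feasibles$, and this outcome has strictly higher PAV score than $X$. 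Having the cohesive group approve $Y\setminus X$, as you suggest, does not obviously force PAV into an outcome that blocks their demand, and you have not indicated how to certify the deserving condition for that group against all feasible $T$.
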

\begin{proof}
First, we prove that if the election has matroid constraints, then PAV satisfies EJR.

    Suppose that the statement does not hold for some election $E$. Let $W$ be an outcome returned by PAV for $E$. Let $S\subseteq N$ be a group of voters that deviates in $W$, and let $\ell=\max_{i \in S} u_i(W) + 1$.
    Let us denote by $A_S$ the set $\bigcap_{i\in S} A_i$. Further, let \[W'=\left\{c\in W : \exists a_c\in A_S \text{ s.t.\ } (W\setminus\{c\})\cup\{a_c\} \in \feasibles\right\}.\]
    Intuitively, for each candidate $c\in W'$ we can swap $c$ with $a_c$, and after such a swap the outcome will still be feasible. Of course, for each candidate $c\in W\setminus W'$ we can also swap $c$ with herself (such a swap does not change the outcome).

    Let us denote by $\Delta(c, c')$ the change in the PAV score obtained by swapping some $c\in W$ with some $c'\in C$. We know that for each such pair of candidates we have $\Delta(c, c') \leq 0$ (as otherwise $W$ would not be an outcome maximizing PAV score). Let us estimate the following expression, which describe the sum of the score change if we swap each time a different element of $W$:
    \begin{equation*}
        \sum_{c\in W'} \Delta(c, a_c) + \sum_{c\notin W'} \Delta(c, c) \text{.}
    \end{equation*}
    Swapping a pair of candidates can be viewed as a two step process, where first we remove one candidate from $W$ and then we add one.
    Let us first estimate the sum of decreases of the PAV score due to removing candidates. As we removed each candidate exactly once, for each voter having $x > 0$ representatives in $W$, we can subtract $x$ times the score of $\nicefrac{1}{x}$. Let us denote by $S_0\subseteq S$ the subset of voters from $S$ who have no representatives in $W$. Hence, the total decrease may be equal to $n-|S_0|$ at most.

   \allowdisplaybreaks
    Now after additions of new candidates to the committee, the PAV score increases in total by at least:
    \begin{align*}
       & \sum_{i\in S\setminus S_0} \left(\frac{|W'\setminus A_i|}{|W\cap A_i|+1}  + \frac{|W\cap A_i|}{|W\cap A_i|}\right) + \sum_{i\in S_0} \frac{|W'\setminus A_i|}{|W\cap A_i|+1}\\
       &\geq \sum_{i\in S} \left(\frac{|W'\setminus A_i|}{|W\cap A_i|+1} + 1\right) - |S_0|\\
        &= \sum_{i\in S} \frac{|W'\setminus A_i| + |W\cap A_i|+1}{|W\cap A_i|+1} - |S_0|\\
        &\geq \sum_{i\in S} \frac{|W'| + |(W\setminus W')\cap A_i|+1}{\ell} - |S_0|\\
        &\geq \sum_{i\in S} \frac{|W'| + |(W\setminus W')\cap A_S|+1}{\ell} - |S_0|\\
        &\geq |S|\cdot \frac{|W'\setminus A_S| + |W\cap A_S|+1}{\ell} - |S_0|.
        \end{align*}
    Intuitively, for each voter $i\in S$, if the removed candidate was from $W\cap A_i$, we add the $(|W\cap A_i|)$th candidate supported by her, otherwise we add the $(|W\cap A_i|+1)$th candidate supported by her.

    Finally, we have that:
    \begin{equation*}
        |S|\cdot \frac{|W'\setminus A_S| + |W\cap A_S|+1}{\ell} - |S_0| - (n - |S_0|) \leq 0.
    \end{equation*}
    \begin{equation}\label{eq:pav-upperbound}
        \frac{|S|}{n} \leq \frac{\ell}{|W'\setminus A_S| + |W\cap A_S|+1}.
    \end{equation}

    Consider now set $T$ obtained in the following way:
    \begin{enumerate}[(1)]
        \item first, we set $T = W'$,
        \item second, we remove from $T$ all the candidates from $W'\cap A_S$ and some arbitrary additional $\ell-1-|W\cap A_S|$ candidates.
    \end{enumerate}

    We will prove that $S$ cannot propose a subset $X\subseteq A_S$ of size $\ell$ such that $T\cup X\in \feasibles$. Indeed, otherwise from the \ref{cond:exchange-property} applied to sets $W'$ and $T\cup X$ we would have that there exists a candidate $c\in X\setminus W'$ such that $W'\cup\{c\}\in \feasibles$. We distinguish two cases. Either $W=W'$, but then $W\cup \{c\}$ is a set of greater score than $W$, a contradiction. Or $W\setminus W'\neq \emptyset$, but then from \Cref{lem:one-swap} applied to $W$, $W\setminus W'$ and $c$, we get that there exists a candidate $c'\in W\setminus W'$ such that $W\setminus \{c'\}\cup \{c\}\in \feasibles$. But from the definition of set $W'$ there need to hold that $c'\in W'$, a contradiction.

    Since we have proved that group $S$ cannot propose any committee $X\subseteq A_S$ of size $\ell$ such that $T\cup X\in \feasibles$, it needs to hold:
    \begin{equation*}
        \frac{|S|}{n} > \frac{\ell}{|T| + \ell}.
    \end{equation*}

    From the construction of set $T$, we know that:
    \begin{align*}
        |T|  &= |W'| - |W'\cap A_S| - (\ell - 1 - |W \cap A_S|)\\
              &= |W'\setminus A_S| + |W\cap A_S| - \ell + 1 \text{.}
    \end{align*}
    Therefore:
    \begin{equation}\label{eq:pav-lowerbound}
        \frac{|S|}{n} > \frac{\ell}{|W'\setminus A_S| + |W\cap A_S|+1}.
    \end{equation}
    Joining  \eqref{eq:pav-lowerbound} and \eqref{eq:pav-upperbound} we obtain a contradiction, which completes the first part of the proof.

    Now we prove that if the feasibility constraints are not a matroid, then PAV violates Base EJR. Suppose that for given constraints, there exist sets $X, Y\in \feasibles$ such that $|X| < |Y|$ and $X\cup\{c\}\notin \feasibles$ for all $c\in Y\setminus X$. Denoted such a pair $X,Y$ as \emph{witness}. First, observe that for a witness the following is true: $X\neq \emptyset$, $X\nsubseteq Y$, and $|Y\setminus X| \geq 2$. All the statements follow from the fact that $\feasibles$ is closed under inclusion. Now, among all such witnesses, consider a one that first minimize $|X|$ and second minimize $|Y\setminus X|$. Denote by $\ell = |X|>1$.

    Consider now the following construction. Let $n$ be a multiple of $3\cdot (\ell+1)$. We have a group $S_1$ of $\nicefrac{\ell}{\ell+1}\cdot n + 1$ voters approving exactly $X$ and a group $S_2$ of the remaining $\nicefrac{n}{\ell+1} - 1$ voters approving exactly $Y\setminus X$.

    Let us first show that $S_1$ deserves $\ell$ candidates. Indeed, for $T=\emptyset$ this group can propose the feasible set $X$ of size $\ell$ they jointly approve. On the other hand, if $|T| \geq 1$, then we have:
    \begin{align*}
    \frac{|S_1|}{n} = \frac{\ell}{\ell+1} + \frac{1}{n} > \frac{\ell}{\ell+1} \geq \frac{\ell}{\ell+|T|} \text{.}
    \end{align*}
    Since $S_1$ deserves $\ell$ candidates, Base EJR may be satisfied only if all the candidates from $X$ are elected. Then, directly from our assumptions, no candidate from $Y \setminus X$ can be elected.

    Now consider set $X$ with one candidate $c'\in X$ removed. Then there exists $c_1\in Y\setminus X$ such that $(X\setminus\{c'\})\cup\{c_1\}\in \feasibles$, since otherwise $(X\setminus\{c'\}, Y)$ would be a smaller witness. Similarly, we conclude that there exists also $c_2 \in Y\setminus X$, $c_2 \neq c_1$ such that $(X\setminus \{c'\})\cup\{c_1, c_2\}\in \feasibles$, since otherwise $((X\setminus\{c'\})\cup\{c_1\}, Y\setminus\{c_1\})$ would be a smaller witness (indeed, we have $|(X\setminus\{c'\})\cup\{c_1\}|< |Y\setminus\{c_1\}|$ as $|X|<|Y|+1$).

    Note that all the candidates outside of $X \cup Y$ contribute $0$ PAV points to the final score, hence if Base EJR is satisfied, the PAV score of the winning outcome cannot be higher than the score of outcome $X$. However, the score of outcome $(X\setminus \{c'\})\cup\{c_1, c_2\}$ is higher---compared to $X$, group $S_1$ loses one $\ell$th candidate (namely $c'$), but group $S_2$ gains two representatives instead. Hence:
    \begin{align*}
        \score_\pav((X\setminus \{c'\})\cup\{c_1, c_2\}) - \score_\pav(X) &\geq (1+\frac{1}{2}) \cdot |S_2| - \frac{|S_1|}{\ell} \\
         &= \frac{3}{2} \cdot \frac{n}{\ell+1} - \frac{3}{2} - \frac{n}{\ell+1} + \frac{1}{\ell} \\
        &> \frac{1}{2} \cdot \frac{n}{\ell+1} - \frac{3}{2} \geq 0
    \end{align*}
    The last inequality comes from the assumption that $n \geq 3\cdot (\ell+1)$. Hence, $X$ is not elected by PAV and Base EJR is violated. This completes the second part of the proof.
\end{proof}

Let us now interpret \Cref{thm:pav-and_ejr}. Intuitively, it says that PAV gives strong proportionality guarantees if all the candidates cary the same weight in the feasibility constraints. An example type of elections where PAV fails EJR is participatory budgeting~\citep{participatoryBudgeting}, where different candidates might have different costs, and so some candidates can exploit the feasibility constraints to a higher extent than the others. We will discuss in more detail such types of constraints in \Cref{sec:weighted-candidates}.

\section{Priceable Outcomes}

In this section we take a different approach to designing proportional selection rules. It is based on the idea of priceability~\cite{pet-sko:laminar,pet-pie-sha-sko:stable-priceability}, which can be intuitively described as follows: the voters are initially endowed with some fixed amount of virtual money; this money can be spent only on buying the candidates. The voters prefer to buy such candidates for which they are asked to pay least; typically these are candidates who have higher support, as for such candidates more voters are willing to participate in a purchase. The outcome consists of the purchased candidates. This approach has already proved useful in the design of selection rules for committee elections and participatory budgeting~\cite{pet-sko:laminar,pet-pie-sha-sko:stable-priceability,pet-pie-sko:c:participatory-budgeting-cardinal}. We start by describing a rule that implements this idea in a sequential manner.

\subsection{Phragm\'en's Sequential Method}\label{sec:phragmen}

In this section we define a natural extension of the Phragm\'en's Sequential Method~\cite{Phra96a,Janson16arxiv,aaai/BrillFJL17-phragmen} (a~method known from the setting of committee elections) to the model with general constraints.

\begin{definition}[Phragm\'en's Sequential Method]\label{def:phragmen}
 We start with an empty outcome $W = \emptyset$. The price for each candidate $c$ is $1$ dollar; this cost needs to be covered by the supporters of $c$ if $c$ is selected. Voters earn money continuously at a constant speed (say, $1$ dollar per second). At each time moment, when a group of supporters of some candidate $c$ has $1$ dollar in total, $c$ is purchased: we set $W \gets W \cup \{c\}$ and reset the budgets of the voters from $N(c)$ to $0$. After that, we remove from the election all the candidates $c'$ such that $W\cup\{c'\}\notin \feasibles$ and continue the procedure until all the candidates are either purchased or removed.
\end{definition}

While the definition assumes that time and money are continuous, the exact moments of purchasing next candidates can be computed in polynomial time. It is known that in the context of committee elections the Phragm\'en's sequential method fails EJR~\cite{aaai/BrillFJL17-phragmen}, but nevertheless it has very good properties pertaining to proportionality~\cite{lac-sko:multiwinner-book}. In particular, it satisfies the axiom of  Proportional Justified Representation (PJR)~\cite{pjr17}, a weaker variant of EJR. We will show that the method preserves its good properties as long as the constraints have a matroid structure.

We start by generalising the axiom of Proportional Justified Representation to the case of general constraints. It differs from EJR (\Cref{def:main-ejr}) in a way how the value of $\ell$ is defined.

\begin{definition}[Proportional Justified Representation (PJR)]\label{def:pjr}
  Given an election $E = (C, N, \feasibles, \mathcal{A})$, we say that a group of voters $S \subseteq N$ \myemph{weakly deviates} in an outcome $W\in\feasibles$ if for $\ell= |(\bigcup_{i\in S} A_i) \cap W |+ 1$ and for each set $T\subseteq W$ either there exists $X \subseteq \bigcap_{i\in S} A_i$ with $|X| \geq \ell$ such that $T \cup X \in \feasibles$, or the following inequality holds:
\begin{align}\label{eq:tpjr_condition}
\frac{|S|}{n} > \frac{\ell}{|T| + \ell} \text{.}
\end{align}
We say that a feasible outcome $W \in \feasibles$ of an election $E = (C, N, \feasibles,\mathcal{A})$ satisfies \myemph{proportional justified representation (PJR)} if no group of voters $S\subseteq N$ weakly deviates in $W$.
\hfill $\lrcorner$
\end{definition}

The axiom of Base PJR is defined analogously.

\begin{definition}[Base Proportional Justified Representation (BPJR)]\label{def:pjr}
  We say that a feasible outcome $W \in \feasibles$ of an election $E = (C, N, \feasibles,\mathcal{A})$ satisfies \myemph{base proportional justified representation (BPJR)} if for each $\ell \in \naturals$ and each group of voters $S\subseteq N$ that deserves $\ell$  candidates (according to \Cref{def:ejr}) there are at least $\ell$ candidates from $\bigcup_{i\in S} A_i$ in $W$.
\end{definition}

We will now prove that Phragm\'en's Sequential Method satisfies PJR if and only if the feasibility constraints have a matroid structure.

\begin{theorem}\label{thm:phragmen-and_pjr}
    Phragm\'en's Sequential Method satisfies PJR for elections with matroid constraints. For each non-matroid feasibility constraints there is an election where the method fails Base PJR.
\end{theorem}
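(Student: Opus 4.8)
The plan is to follow the two-sided structure of the proof of \Cref{thm:pav-and_ejr}: assuming PJR fails, I will squeeze $|S|/n$ between an upper bound coming from the payment dynamics of Phragmén's method and a strictly larger lower bound coming from the fact that $S$ deserves $\ell$, which is the contradiction. So suppose $S$ deserves $\ell$ but the outcome $W$ satisfies $|W\cap U|\le \ell-1$, where $U:=\bigcup_{i\in S}A_i$ and $A_S:=\bigcap_{i\in S}A_i$. Applying \Cref{def:ejr} with $T=\emptyset$ yields a feasible $X\subseteq A_S$ with $|X|\ge \ell$, so $\mathrm{rank}(A_S)\ge \ell$; and $b:=|W\cap A_S|\le |W\cap U|\le \ell-1$.

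For the upper bound I would introduce $t_0$, the last moment at which some commonly approved candidate is still addable, and let $W_{t_0}$ be the partial outcome just after $t_0$. Since feasibility only tightens as the outcome grows, no commonly approved candidate can be added after $t_0$, so $W\cap A_S=W_{t_0}\cap A_S$ and $A_S$ lies in the flat spanned by $W_{t_0}$; hence $|W_{t_0}|=\mathrm{rank}(W_{t_0})\ge \mathrm{rank}(A_S)\ge \ell$. The heart of the argument is the claim that the total unspent balance of $S$ never exceeds $1$ before $t_0$: whenever some $a\in A_S$ is addable it has been addable since the start (addability is monotone under inclusion-closedness), its set of supporters contains $S$, and Phragmén buys a candidate the instant its supporters can collectively afford it; hence the supporters of $a$—and therefore $S$ alone—hold at most one dollar. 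I expect this to be the main obstacle, because it must be argued carefully around simultaneous purchases and around candidates being removed. Granting it, at time $t_0$ the group has earned $|S|t_0$, spent some $\sigma_0\le |W\cap U|\le \ell-1$, and has balance at most $1$, so $|S|t_0\le 1+\sigma_0$; meanwhile total spending by $t_0$ equals $|W_{t_0}|$ and the complement spends at most $(n-|S|)t_0$, giving $|W_{t_0}|\le \sigma_0+(n-|S|)t_0$. Multiplying out and using $\sigma_0\le\ell-1$ rearranges to $|S|\,|W_{t_0}|\le n\sigma_0+(n-|S|)\le n\ell-|S|$, that is, $\frac{|S|}{n}\le \frac{\ell}{|W_{t_0}|+1}$.

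For the matching lower bound I would feed \Cref{def:ejr} a trimmed feasible set. Take $T\subseteq W_{t_0}$ obtained by deleting all $b$ commonly approved candidates of $W_{t_0}$ together with an arbitrary further $\ell-1-b$ candidates, so that $T\cap A_S=\emptyset$ and $|T|=|W_{t_0}|-\ell+1$ (possible since $|W_{t_0}|\ge \ell$). Because $W_{t_0}$ spans $A_S$, monotonicity of rank gives $\mathrm{rank}(T\cup A_S)\le |W_{t_0}|$, and since $T$ is independent and disjoint from $A_S$, any feasible $T\cup X$ with $X\subseteq A_S$ obeys $|X|=|T\cup X|-|T|\le |W_{t_0}|-(|W_{t_0}|-\ell+1)=\ell-1<\ell$. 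Thus $S$ cannot complete $T$ with $\ell$ commonly approved candidates, so the deserving hypothesis forces $\frac{|S|}{n}>\frac{\ell}{|T|+\ell}=\frac{\ell}{|W_{t_0}|+1}$, contradicting the previous paragraph and finishing the matroid direction.

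For the converse I would exhibit a failure directly rather than reuse the PAV construction, since a naive two-group race never defeats Phragmén. Pick a witness of non-matroidness, i.e.\ $X,Y\in\feasibles$ with $|X|<|Y|$ and $X\cup\{c\}\notin\feasibles$ for every $c\in Y\setminus X$ (the negation of \ref{cond:exchange-property}), and set $\ell=|X|\ge 1$. Let a group $S$ of $n-1$ voters approve exactly $Y$ and add one extra voter approving only $X$, with $n>\ell+2$; then the only binding case of \Cref{def:ejr} is $T=\emptyset$, where $S$ may propose all of $Y$, so $S$ deserves $\ell+1$. Every candidate of $X$ is approved by all $n$ voters whereas each candidate of $Y\setminus X$ is approved only by the $n-1$ voters of $S$, so Phragmén purchases all of $X$ before any candidate of $Y\setminus X$ becomes affordable; once $X$ is complete, every $c\in Y\setminus X$ is removed, leaving $W=X$. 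Hence $W\cap\bigcup_{i\in S}A_i=X$ has size $\ell$, while $S$ deserves $\ell+1$, so PJR fails. The role of non-matroidness is precisely this removal step: the exchange property would instead allow $S$ to keep adding commonly approved candidates beyond $X$.
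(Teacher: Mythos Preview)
Your matroid direction is correct and close in spirit to the paper's argument. Both identify the critical time at which no candidate of $A_S=\bigcap_{i\in S}A_i$ remains addable and derive matching bounds of the form $\frac{|S|}{n}\le\frac{\ell}{|W_{t_0}|+1}$ and $\frac{|S|}{n}>\frac{\ell}{|W_{t_0}|+1}$. The paper obtains the upper bound more directly by bounding $t\le \ell/|S|$ and then bounding $|T|$ with $T=W_t\setminus\bigcup_{i\in S}A_i$; you instead track the residual balance of $S$ and use matroid rank/closure language. Either works.

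The non-matroid direction, however, is wrong. A witness $X,Y\in\feasibles$ to the failure of \ref{cond:exchange-property} \emph{must} satisfy $X\not\subseteq Y$: if $X\subseteq Y$ then any $c\in Y\setminus X$ gives $X\cup\{c\}\subseteq Y\in\feasibles$, contradicting the choice of $X,Y$. Hence $X\setminus Y\neq\emptyset$. In your construction the $n-1$ voters of $S$ approve exactly $Y$ and the single extra voter approves only $X$, so candidates in $X\setminus Y$ are approved by \emph{one} voter, not by all $n$ as you assert. Phragm\'en therefore first buys $X\cap Y$ (unanimous), then buys candidates of $Y\setminus X$ (supported by $n-1$ voters and still feasible, since $(X\cap Y)\cup\{c\}\subseteq Y$), long before the lone extra voter can afford anything in $X\setminus Y$. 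The output is not $X$; in fact Phragm\'en may elect all of $Y$, and then $|W\cap\bigcup_{i\in S}A_i|\ge|Y|\ge\ell+1$, so PJR is satisfied and your counterexample collapses.

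The paper's construction avoids this by reversing the roles: \emph{every} voter approves $X$, and the group $S$ (of size a bit more than $\frac{|Y|}{|Y|+1}n$, with $\ell=|Y|$) additionally approves $Y$. Then $X$ is genuinely unanimous, Phragm\'en buys all of $X$ first, and the non-matroid hypothesis blocks every $c\in Y\setminus X$ thereafter, leaving $|W\cap\bigcup_{i\in S}A_i|\le|X|<\ell$ while $S$ deserves $\ell$.
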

\begin{proof}
    We start by proving the first part of the theorem statement. Towards a contradiction assume there is a group $S$ weakly deviating in an outcome returned by Phragm\'en's Sequential Method, $W_{\mathrm{Phr}}$. Let $\ell= |(\bigcup_{i\in S} A_i) \cap W_{\mathrm{Phr}} |+ 1$. Consider the first moment, $t$, when all the candidates from $\bigcap_{i\in S} A_i$ are either elected or removed. Note that $t \leq \nicefrac{\ell}{|S|}$. Indeed, if $t > \nicefrac{\ell}{|S|}$, then at time $\nicefrac{\ell}{|S|}$ the group $S$ would collect in total $\ell$ dollars, and would buy at least $\ell$ candidates from $\bigcup_{i\in S} A_i$ (the possibility of buying such candidates comes from the fact that there would always be a candidate from $\bigcap_{i\in S} A_i$ available for purchase).

     Let $W \subseteq W_{\mathrm{Phr}}$ denote an outcome purchased at time $t$. The voters from $N \setminus S$ could spend at most $(n-|S|)\cdot \nicefrac{\ell}{|S|} = \nicefrac{n\cdot \ell}{|S|}-\ell$ dollars on candidates from set $T = W\setminus \bigcup_{i\in S} A_i$. In particular, as the price for all the candidates is $1$, it means that:
    \begin{equation*}
        |T| \leq \frac{n\cdot \ell}{|S|}-\ell.
    \end{equation*}

    Now we need to consider two cases. First, suppose that there is no set $X\subseteq \bigcap_{i\in S} A_i$ of size $\ell$ such that $T \cup X \in \feasibles$. Then, as $S$ is weakly deviating, the following inequality holds:
    \begin{equation*}
        \frac{|S|}{n} > \frac{\ell}{|T| + \ell} \geq \frac{\ell}{\nicefrac{n\cdot \ell}{|S|}-\ell + \ell} = \frac{|S|}{n},
    \end{equation*}
    a contradiction. Hence, there exists set $X\subseteq \bigcap_{i\in S} A_i$ of size $\ell$ such that $T \cup X \in \feasibles$.

    Since we assumed that $|W \cap \bigcup_{i\in S} A_i| < \ell$, we infer that $W$ has strictly smaller size than $T\cup X$. Now we can apply the exchange property \ref{cond:exchange-property} to sets $W$ and $T\cup X$ to obtain that there exists a candidate $c\in X \setminus W$ such that $W \cup \{c\}\in \feasibles$. But then we have that $c\in \bigcap_{i\in S} A_i$ and $c$ was neither elected (since $c \notin W$) nor removed (since $W \cup \{c\}\in \feasibles$). We obtain a contradiction.

    Now we prove that if the feasibility constraints are not matroid constraints, then Phragm\'en's Sequential Method violates Base PJR. Suppose that for given constraints, there exist nonempty sets $X, Y\in \feasibles$ such that $|X| < |Y|$ and $X\cup\{c\}\notin \feasibles$ for all $c\in Y\setminus X$. Then it holds that $X\nsubseteq Y$ and so $|Y\setminus X| \geq 2$. Denote by $\ell = |Y|$.

    Consider the following construction. All the voters approve candidates from $X$. Additionally, we have a group $S$ of $\nicefrac{\ell}{\ell+1}\cdot n + 1$ voters, each of whom additionally approves $Y$.

    Let us first show that $S$ deserves $\ell$ candidates. Indeed, for $T=\emptyset$ this group can propose the feasible set $Y$ of size $\ell$ they jointly approve. Otherwise, if $|T| \geq 1$:
    \begin{align*}
    \frac{|S|}{n} = \frac{\ell}{\ell+1} + \frac{1}{n} > \frac{\ell}{\ell+1} \geq \frac{\ell}{\ell+|T|} \text{.}
    \end{align*}

    Note that Phragm\'en's Sequential Method first elects all the unanimous candidates, namely $X$. However, after that no candidates from $Y\setminus X$ can be elected. Hence, $S$ would get only at most $|X| < \ell$ representatives, which completes the proof.
\end{proof}

While Phragm\'en's Sequential Method does not satisfy (Base) EJR, it still guarantees that the voters from a group deserving $\ell$ candidates have high utility on average---in fact as high as it would be implied by Base EJR (cf.~\Cref{prop:ejr-average-sat}). This result also holds for the general class of matroid constraints. We prove it by combining the ideas from \Cref{thm:phragmen-and_pjr} and from the work of \citet{skowron:prop-degree}. This result, together with the fact that the Phragm\'en's method can be computed in polynomial time, makes the rule particularly appealing and practical.

\Cref{thm:phragmen-and_pjr} generalises the recent result by~\citet{cha-goe-pet:seq-decision-making}, proved in the context of sequential decision making.

\begin{theorem}
Let $W$ be the outcome returned by Phragm\'en's Sequential Method for an election with matroid constraints. For each group of voters $S$ that deserves $\ell$ candidates in $W$ we have:
\begin{align}\label{eq:prag-avg-sat}
\frac{1}{|S|}\sum_{i \in S} |A_i \cap W| \geq \frac{\ell-1}{2} \text{.}
\end{align}
\end{theorem}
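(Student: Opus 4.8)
The plan is to follow the same high-level reduction as in the proof of \Cref{prop:ejr-average-sat}, but to replace the step that invokes EJR (which Phragmén's method does not satisfy) by an argument tailored to the dynamics of the rule, in the spirit of \citet{skowron:prop-degree}. Concretely, I would reuse the subgroup lemma that is isolated inside the proof of \Cref{prop:ejr-average-sat}: if $S$ deserves $\ell$ candidates, then for every $i\in\{0,1,\dots,\ell-1\}$ every subgroup $S'\subseteq S$ with $|S'|\ge |S|\cdot\frac{\ell-i}{\ell}$ deserves $\ell-i$ candidates. This is the device that lets me peel off one utility level at a time.

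The second ingredient is the timing/availability analysis from the first half of \Cref{thm:phragmen-and_pjr}. The key point there was that as long as a group deserving $r$ candidates holds strictly fewer than $r$ representatives from $\bigcup_{i} A_i$, the matroid exchange property \ref{cond:exchange-property} guarantees that some commonly-approved candidate from $\bigcap_{i} A_i$ is still available for purchase; hence such a group necessarily collects $r$ representatives by the time it has earned $r$ dollars, i.e.\ by time $r/|S'|$. Applying this to the subgroup $S'$ of size $|S|\frac{\ell-i}{\ell}$ that deserves $\ell-i$ candidates, the threshold time becomes $(\ell-i)/|S'| = \ell/|S|$, which is \emph{independent of $i$}. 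Thus by the single time $t^{*}=\ell/|S|$ every one of these nested subgroups has already acquired its full quota of representatives, and in particular the group never hoards money while representatives are still missing.

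The final and main step is to convert these aggregate acquisition guarantees into a lower bound on $\sum_{i\in S}|A_i\cap W|$; this is exactly where the technique of \citet{skowron:prop-degree} enters. I would measure satisfaction as it accrues along the timeline $[0,t^{*}]$: each purchase of a candidate $c$ increases the group's total satisfaction by $|N(c)\cap S|$, and, because the group buys representatives at the rate dictated by its earning speed $|S|$, the layered guarantee from the previous paragraph (the $\frac{\ell-i}{\ell}$ fraction of $S$ captured by $S'$ reaches $\ell-i$ representatives by $t^{*}$) can be summed over the levels $i=0,\dots,\ell-1$. The contributions telescope into an arithmetic series of the form $\frac{1}{\ell}\sum_{j=1}^{\ell-1}j=\frac{\ell-1}{2}$, matching the bound of \Cref{prop:ejr-average-sat}.

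I expect the main obstacle to be precisely this last conversion. Unlike the EJR argument of \Cref{prop:ejr-average-sat}, PJR (\Cref{def:pjr}) only controls the number of distinct candidates from $\bigcup_{i\in S}A_i$ in $W$ and never pins down the utility of any individual voter, so the bound must be extracted at the aggregate level from the budget dynamics. Two points require care: candidates approved by only part of the group contribute less than $|S|$ to the satisfaction increment and must be charged correctly, and at any moment up to roughly one dollar of the group's budget may be ``in progress'' and not yet converted into a purchased candidate, which is the source of the off-by-a-constant discrepancies that the Skowron-style accounting is designed to absorb. Getting the bookkeeping to close at exactly $\frac{\ell-1}{2}$, rather than at a weaker constant, is the delicate heart of the proof.
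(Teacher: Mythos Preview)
Your proposal contains a structural misdirection: the subgroup/layering device borrowed from \Cref{prop:ejr-average-sat} is the wrong tool here, and the paper's proof does not use it at all. The layering works in \Cref{prop:ejr-average-sat} precisely because EJR supplies a \emph{per-voter} guarantee at each level, which lets you peel voters off one at a time. Phragm\'en satisfies only PJR, and PJR applied to a subgroup $S'$ bounds only $|W\cap\bigcup_{i\in S'}A_i|$, the number of \emph{distinct} candidates in the union of their approvals. That quantity has no direct relation to $\sum_{i\in S'}|A_i\cap W|$: a single voter in $S'$ could approve all of those candidates while the others approve none. Hence the step ``the layered guarantee \ldots can be summed over the levels $i=0,\dots,\ell-1$ [and] the contributions telescope into an arithmetic series'' does not go through; there is nothing to sum across the PJR layers that produces $\sum_{i\in S}|A_i\cap W|$.

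The paper's argument works only with the full group $S$. It fixes a time $t=\ell/|S|+(\Delta-1)/n$ (the additive correction tracks unspent budget) and shows---via the matroid exchange property and the ``deserves $\ell$'' condition, exactly as in \Cref{thm:phragmen-and_pjr} but applied once, not layered---that at $t$ some candidate from $\bigcap_{i\in S}A_i$ is still available; by monotonicity this holds on all of $[0,t]$. It then invokes the potential-function result of \citet{skowron:prop-degree} as a black box: whenever a commonly-approved candidate is available, the voters of $S$ pay on average at most $2/|S|$ per approved candidate, i.e.\ $\sum_{i\in S}|A_i\cap W_t|\geq \tfrac{|S|}{2}\cdot(\text{money spent by }S)$. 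Finally, the choice of $t$ and $\Delta$ forces $S$ to have spent at least $\ell-1$ dollars, giving the bound. The ``$2/|S|$ per approval'' statement is a standalone fact about Phragm\'en's dynamics on the full group; you are right that Skowron's accounting is the engine, but it should be applied directly to $S$, not routed through layered subgroup PJR guarantees.
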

\begin{proof}
Consider a group of voters $S\subseteq N$ deserving $\ell$ candidates. Towards a contradiction assume that Inequality \eqref{eq:prag-avg-sat} does not hold.
We first define the time $t$ as follows:
\begin{align*}
t =  \frac{\ell}{|S|} + \frac{\Delta - 1}{n} \text{,}
\end{align*}
where $\Delta$ is the smallest non-negative value such that at $t$ the voters from $S$ have at most $\Delta$ unspent dollars (if such $\Delta$ does not exist, then we simply set $\Delta = 0$). There are two possibilities:
\begin{enumerate}
\item Either at $t$ there was a purchase $\gamma$ such that before the purchase the voters from $S$ had at least $\Delta$ unspent dollars, and after the purchase they had at most $\Delta$ unspent dollars, or
\item at $t$ the voters from $S$ had at least $\Delta$ unspent dollars.
\end{enumerate}
The analysis in both cases is the same, thus without loss of generality let us assume that we are in the first case.
We will first prove that at $t$, before the purchase $\gamma$ there was a candidate from $\bigcap_{i\in S} A_i$ that was neither elected nor removed.

At $t$ the voters collected in total $tn$ dollars, and they have spent at most $tn - \Delta$. Hence, they have bought the set $W$ of at most $t n  - \Delta$ candidates.
From $W$ we remove in total $\ell - 1$ candidates, and let us call the remaining set $T$; if $W$ contained fewer than $\ell-1$ candidate, then we simply set $T = \emptyset$. We can remove these $\ell - 1$ candidates in such a way that $T \cap \bigcap_{i\in S} A_i = \emptyset$ (this follows from the fact that the Inequality \eqref{eq:prag-avg-sat} is not satisfied, and so there are fewer than $\ell-1$ candidates in $W \cap \bigcap_{i\in S} A_i$).

If there exists $X \subseteq \bigcap_{i\in S} A_i$ such that $X \cup T \in \feasibles$, then by the exchange property \ref{cond:exchange-property} applied to $X \cup T$ and $W$ we infer that there must exist a candidate $c \in  \bigcap_{i\in S} A_i$ such that $W \cup \{c\} \in \feasibles$ (which is exactly what we wanted to prove). Otherwise, since $S$ deserves $\ell$ candidates we get that:
\begin{align*}
\frac{|S|}{n} > \frac{\ell}{|T| + \ell} \text{.}
\end{align*}
In particular, this means that $T \neq \emptyset$ and so $T = |W| - \ell + 1$. Consequently, we get that:
\begin{align*}
\frac{|S|}{n} > \frac{\ell}{|W| + 1} \text{.}
\end{align*}
From that we get that
\begin{align*}
|W| > n \cdot \left(\frac{\ell}{|S|} - \frac{1}{n} \right) = nt - \Delta \text{.}
\end{align*}
This leads to a contradiction.

Thus, at time $t$ there is a candidate from $\bigcap_{i\in S} A_i$ available for purchase. We can now use exactly the same reasoning as the one provided in the work of \citet{skowron:prop-degree}. There, using an argument involving potential functions, it was implicitly proved that at each time moment, as long as some candidate from $\bigcap_{i\in S} A_i$ can be purchased, the voters from $S$ pay on average at most $\nicefrac{2}{|S|}$ per approved candidate. Thus, the average payment per approved candidate until time $t$ was no greater than $\nicefrac{2}{|S|}$.

At time $t$ the voters from $S$ spent at least $t|S| -\Delta$ dollars in total. Let us assess this value:
\begin{align*}
t|S| -\Delta =  \left(\frac{\ell}{|S|} + \frac{\Delta - 1}{n}\right) |S| - \Delta = \ell + \frac{|S|}{n} \cdot (\Delta - 1)  - \Delta \geq  \ell - 1 \text{.}
\end{align*}
The last inequality follows from the fact that $\Delta \leq 1$ (otherwise, they money held by $S$ would be used earlier to buy a candidate $\bigcap_{i\in S} A_i$).

Consequently, the voters from $S$ approve on average at least the following number of candidates:
\begin{align*}
\frac{1}{|S|} \cdot \frac{\ell-1}{\frac{2}{|S|}} = \frac{\ell-1}{2} \text{.}
\end{align*}
This completes the proof.
\end{proof}

So far our results applied to matroid constraints only. Interestingly, if the constraints are non-matroid, then Phragm\'en's Sequential Method still can be successfully applied---in the most general case it provides approximate proportionality guarantees (cf.~\Cref{thm:phragmen-and_pjr_weighted}).

\subsection{Stable Priceable Outcomes}\label{sec:stable-priceable}

In the previous section, we have described the sequential process of buying candidates. Another approach would be to define the outcomes as an equilibrium in a certain market. This idea has been proposed by~\citet{pet-pie-sha-sko:stable-priceability}, who introduced the concept of \myemph{stable priceability}, inspired by the classic concept of Lindahl's equilibrium~\cite{foley:lindahl}. In this section we adapt the concept to the setting with general constraints. It requires introducing a few additional elements that relate candidate prices and feasibility constraints.


Let us recall the definition of \myemph{stable priceability}~\cite{pet-pie-sha-sko:stable-priceability}. Let $\pi_c$  denote the price of a candidate $c$ and $\pi_W=\sum_{c\in W}\pi_c$. Given a voter $i \in N$ the payment function $p_i \colon C \to \reals$ specifies how much the voter pays for the particular candidates. We require that $p_{i}(c) \geq 0$ for each $c \in C$ and that $\sum_{c \in C} p_i(c) = 1$; intuitively, this means each voter has the total budget of one unit. We say that an outcome $W$ is stable-priceable if there exists a collection of candidate prices $\{\pi_c\}_{c \in C}$ and payment functions $\{p_i\}_{i \in N}$ such that the following conditions hold:
\begin{enumerate}[label=(SP\arabic*)]
\item The voters pay only for the selected candidates, i.e., $p_i(c) = 0$ for each $i \in N$ and $c \notin W$.
\item \label{sp:paying-for-selected} The total payment for each selected candidate $c \in W$ must equal its price, $\sum_{i \in N}p_i(c) = \pi_c$.
\item \label{sp:condition} For each not-selected candidate $c \notin W$ we have:
\begin{align*}
\sum_{i \in N(c)} \max(r_i, \max_{c' \in W} p_i(c')) \leq \pi_c \quad \text{where} \quad r_i = 1 - \sum_{c' \in W}p_i(c').
\end{align*}
This condition can be intuitively explained as follows. Each voter is primarily interested in buying as many approved candidates as possible. Secondarily, the voter is interested in spending as little money as possible. Thus, each voter is willing to pay for $c$ either her all remaining money $r_i$ or to stop paying for some already selected candidate $c'$ and to pay the same or a lower amount for $c$ instead. If the supporters of candidate $c$ can pay its price this way, then the payment functions are not stable.
\item \label{item:producer-stability} The outcome $W$ maximizes the total price:
\begin{align*}
W \in \argmax_{W' \in \feasibles} \sum_{c \in W'} \pi_c.
\end{align*}
\end{enumerate}
The last condition is new to this paper, and it corresponds to the concept of producer-stability from the economic literature on markets with public goods~\cite{foley:lindahl}. In the original definition of~\citet{pet-pie-sha-sko:stable-priceability} the prices for all the candidates were required to be equal and hence, instead of the last condition, only exhaustiveness was required. In \Cref{sec:weighted-candidates} we show that some of our results also hold if we replace condition \ref{item:producer-stability} with the condition of exhaustiveness.
\begin{enumerate}[]\phantomsection
  \item[(SP4$^{*}$)] \label{item:producer-stability-exh} The outcome $W$ is exhaustive, i.e., for each $c \notin W$ we have $W \cup \{c\} \notin \feasibles$.
\end{enumerate}

Stable-priceable outcomes might not exist, but if they do, they have very good fairness-related properties.

\begin{theorem}\label{thm:sp-ejr}
For elections with matroid constraints all stable-priceable outcomes satisfy EJR.
For elections with general constraints each stable-priceable outcome such that all candidates $c$ have a common price $\pi_c = \pi$ satisfies EJR.
\end{theorem}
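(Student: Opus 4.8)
The plan is to argue by contradiction: suppose $W$ is stable-priceable, witnessed by prices $\{\pi_c\}$ and payments $\{p_i\}$, yet some group $S$ that deserves $\ell$ candidates has $u_i(W)\le \ell-1$ for every $i\in S$. Write $A_S=\bigcap_{i\in S}A_i$ and $\beta_i=\max\bigl(r_i,\max_{c\in W}p_i(c)\bigr)$, the quantity appearing in \ref{sp:condition}. Two observations hold in both cases. Since $S$ deserves $\ell$ candidates, the case $T=\emptyset$ of \Cref{def:ejr} provides a feasible $X\subseteq A_S$ with $|X|\ge \ell$, so $|A_S|\ge \ell$; as $A_S\cap W\subseteq A_i\cap W$ for each $i\in S$ we get $|A_S\cap W|\le \ell-1$, and hence there is a commonly approved, unselected candidate $c^*\in A_S\setminus W$ with $S\subseteq N(c^*)$. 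Moreover each $i\in S$ pays (only for approved candidates) for at most $\ell-1$ members of $W$, so the largest single payment of $i$ is at least $s_i/(\ell-1)$, where $s_i=1-r_i$; thus $\beta_i\ge \max\bigl(1-s_i,\ s_i/(\ell-1)\bigr)$.

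Feeding $c^*$ into \ref{sp:condition} and using $S\subseteq N(c^*)$ gives $\pi_{c^*}\ge \sum_{i\in S}\beta_i$. Summing $\beta_i\ge 1-s_i$ yields $\pi_{c^*}\ge |S|-\sigma_S$, and summing $\beta_i\ge s_i/(\ell-1)$ yields $\sigma_S\le (\ell-1)\pi_{c^*}$, where $\sigma_S=\sum_{i\in S}s_i$; eliminating $\sigma_S$ gives $\pi_{c^*}\ge |S|/\ell$. Intuitively, an underrepresented cohesive group retains so much reallocatable money that the blocking condition \ref{sp:condition} can hold only if the commonly approved candidates are expensive.

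In the common-price case ($\pi_c=\pi$) I would convert this into a counting contradiction. The total amount spent is $\sum_{c\in W}\pi_c=|W|\pi$ by \ref{sp:paying-for-selected}, and it is at most $n$ minus the unspent money of $S$, namely $\sum_{i\in S}r_i=|S|-\sigma_S\ge |S|-(\ell-1)\pi$; this gives $(|W|-\ell+1)\pi\le n-|S|$, which with $\pi\ge |S|/\ell$ forces $|W|\le n\ell/|S|-1$. Conversely, \ref{item:producer-stability} with a common price makes $W$ a maximum-cardinality feasible set, and \eqref{eq:ejr_condition3} forces that maximum to be large: any feasible set of size $t:=\lfloor \ell(n-|S|)/|S|\rfloor$ can be completed by $\ell$ candidates of $A_S$, and a maximum feasible set of size $\le \ell(n-|S|)/|S|$ could be completed beyond its own size---impossible---so the maximum cardinality is at least $t+\ell=\lfloor n\ell/|S|\rfloor$. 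Since $\lfloor n\ell/|S|\rfloor> n\ell/|S|-1\ge |W|$, this contradicts $|W|$ being that maximum.

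For the matroid case the prices need not be equal, so the identity $\sum_{c\in W}\pi_c=|W|\pi$, and hence the cardinality count, are lost; here the plan is to replace it by the exchange property, mirroring the proof of \Cref{thm:pav-and_ejr}. By matroid augmentation one obtains $c^*\in A_S\setminus W$ with $(W\cap A_S)\cup\{c^*\}$ feasible, and Lemma \ref{lem:one-swap}, applied with the removable set $W\setminus A_S$, yields some $c'\in W\setminus A_S$ with $(W\setminus\{c'\})\cup\{c^*\}$ feasible; \ref{item:producer-stability} then gives $\pi_{c'}\ge \pi_{c^*}\ge |S|/\ell$. The aim is to combine such exchanges with the deserving condition applied to a truncation of $W\setminus A_S$ of size about $\ell(n-|S|)/|S|$, concluding either that $W$ admits a strictly price-improving feasible swap (contradicting \ref{item:producer-stability}) or that the deserving condition exhibits a feasible set of larger total price than $W$. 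I expect this matroid case to be the main obstacle: without a common price one cannot bound $|W|$ from the budget, so the delicate point is to pin down which commonly approved candidate to bring in and to certify---using the single-element swap guaranteed by Lemma \ref{lem:one-swap} together with the max-weight optimality implied by \ref{item:producer-stability}---that the removed candidate is strictly cheaper, making the exchange genuinely improving.
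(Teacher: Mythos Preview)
Your overall architecture---assume an underrepresented deserving group, use \ref{sp:condition} to force every unselected commonly approved candidate to have price at least $|S|/\ell$, and then contradict this via the budget and \ref{item:producer-stability}---is exactly the paper's. The bound $\beta_i\ge 1/\ell$ and hence $\pi_{c^*}\ge |S|/\ell$ is correct and is precisely the endgame the paper uses.

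In the common-price case your counting step has a gap. From $\pi\ge |S|/\ell$ and $(|W|-\ell+1)\pi\le n-|S|$ you correctly get $|W|\le n\ell/|S|-1$. But your claim that the maximum feasible cardinality is at least $\lfloor n\ell/|S|\rfloor$ is not established: applying the deserving condition to a maximum feasible set $M$ only gives $M\cup X\in\feasibles$, which by maximality forces $X\subseteq M$, not $|M\cup X|>|M|$; and you cannot apply \eqref{eq:ejr_condition3} directly to $W$ because $|W|\le n\ell/|S|-1$ need not be $\le \ell(n-|S|)/|S|$ when $\ell\ge 2$. The fix (and this is what the paper does) is to first delete $\ell-1$ candidates from $W$, including all of $W\cap A_S$, obtaining $T$ with $|T|=|W|-\ell+1\le \ell(n-|S|)/|S|$ and $T\cap A_S=\emptyset$; then the deserving condition yields $X\subseteq A_S$ with $|X|\ge\ell$ and $T\cup X\in\feasibles$, so $|T\cup X|=|T|+\ell=|W|+1$, contradicting that $W$ has maximum size.

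For the matroid case you correctly identify the ingredients (swap via \Cref{lem:one-swap}, price comparison via \ref{item:producer-stability}) but the plan you sketch---produce a single strictly price-improving swap---is not how the argument closes, and you rightly flag this as the obstacle. The missing idea is to aggregate over \emph{all} swappable positions: set
\[
W'=\bigl\{c\in W:\ \exists\,a\in A_S\setminus W\text{ with }(W\setminus\{c\})\cup\{a\}\in\feasibles\bigr\}.
\]
By \ref{item:producer-stability} every $c\in W'$ has $\pi_c\ge \pi_a$, where $a$ is the cheapest member of $A_S\setminus W$. Using \Cref{lem:one-swap} one checks that $W'\cup\{c'\}\notin\feasibles$ for every $c'\in A_S\setminus W$; hence, deleting from $W'$ all of $W'\cap A_S$ and enough further candidates to remove $\ell-1$ in total gives a set $T$ for which no $X\subseteq A_S$ of size $\ell$ satisfies $T\cup X\in\feasibles$ (by the exchange property against $W'$). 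The deserving condition then forces $|S|/n>\ell/(|W'|+1)$, i.e.\ $|W'|>n\ell/|S|-1$. Now the budget bound $\sum_{c\in W'}\pi_c\le n-\sum_{i\in S}r_i$ together with $\sum_{i\in S}r_i\le \pi_a$ (from \ref{sp:condition}) and $\sum_{c\in W'}\pi_c\ge \pi_a|W'|$ yields $\pi_a< |S|/\ell$, contradicting $\pi_a\ge |S|/\ell$. So the crux is a counting argument over the whole set $W'$, not a single improving exchange.
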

\begin{proof}
Let $W$ be a stable-priceable outcome. Towards a contradiction assume that there is a group of voters $S$ that deviates in $W$. Let $\ell=\max_{i \in S} u_i(W) + 1$.
We will first prove that there exists a not-selected candidate $a \in \bigcap_{i \in S} A_i \setminus W$, the price of which satisfies the following inequality:
\begin{align*}
\pi_a < \frac{|S|}{\ell} \text{.}
\end{align*}
Let us start with the case of matroid constraints. We first consider the set $W'$ of candidates from $W$ that can be feasibly-exchanged with the candidates from $\bigcap_{i \in S} A_i \setminus W$, that is:
\begin{align*}
W' = \left\{c \in W \colon \text{there exists~}c'\in \bigcap_{i \in S} A_i \setminus W \text{~such that~} W\setminus\{c\}\cup\{c'\} \in \feasibles \right\} \text{.}
\end{align*}
Since the feasibility constraints have a matroid structure, from \Cref{lem:one-swap} we know that for each $c'\in \bigcap_{i \in S} A_i \setminus W$ it holds that $W' \cup \{c'\} \notin \feasibles$. Next, from $W'$ we remove all the candidates in $\bigcap_{i \in S} A_i$. Clearly, we removed at most $\ell-1$ candidates; if we removed strictly less than $(\ell - 1)$ candidates, then we additionally remove some arbitrary candidates so that we removed in total $(\ell-1)$ candidates. Let us denote the resulting set as $T$. Note that for each $X \subseteq \bigcap_{i \in S} A_i$ with $|X| = \ell$ it holds that $T \cup X \notin \feasibles$. This follows directly from the  exchange property \ref{cond:exchange-property} applied to $X \cup T$ and $W'$. Since $S$ deviates in $W$, we get that:
\begin{align*}
\frac{|S|}{n} > \frac{\ell}{|T| + \ell} = \frac{\ell}{|W'| + 1} \text{.}
\end{align*}
After reformulating, we get that:
\begin{align*}
|W'| > n \cdot \frac{\ell}{|S|} - 1\text{.}
\end{align*}

Let $a$ be the cheapest candidate in $\bigcap_{i \in S} A_i \setminus W$.
Condition \ref{item:producer-stability} in the definition of stable-priceability implies that for each candidate $c \in W'$ we have $\pi_c \geq \pi_a$. Consequently:
\begin{align*}
\sum_{c \in W'}\pi_c > \pi_a \cdot \left(n \cdot \frac{\ell}{|S|} - 1\right)\text{.}
\end{align*}
Since $\sum_{c \in W'}\pi_c \leq n - \sum_{i \in S}r_i$ (this follows from condition \ref{sp:paying-for-selected}) we get that:
\begin{align*}
n - \sum_{i \in S}r_i > \pi_a \cdot \left(n \cdot \frac{\ell}{|S|} - 1\right) \text{.}
\end{align*}
By condition \ref{sp:condition} we also get that
\begin{align*}
\sum_{i \in S}r_i \leq \pi_a \text{.}
\end{align*}
By combining the two above inequalities we get that:
\begin{align*}
n > \pi_a n \cdot \frac{\ell}{|S|} \text{.}
\end{align*}
This is equivalent to:
\begin{align*}
\pi_a < \frac{|S|}{\ell} \text{.}
\end{align*}

Now, consider the case where the feasibility constraints are arbitrary, but all the prices are the same, that is for all candidates $c$ we have $\pi_c = \pi$. Now, we proceed as follows.
From $W$ we remove all the candidates from $\bigcap_{i \in S} A_i$ and some arbitrary additional candidates so that in total we removed $\ell-1$ candidates. Let us denote the resulting set by $T$.

Condition \hyperref[item:producer-stability-exh]{(SP4$^{*}$)} in the definition of stable-priceability implies that for any set $X \subseteq \bigcap_{i \in S} A_i$ with $|X| = \ell$, $T \cup X \notin \feasibles$. Since $S$ deviates in $W$ we get that:
\begin{align*}
\frac{|S|}{n} > \frac{\ell}{|T| + \ell} = \frac{\ell}{|W| + 1} \text{.}
\end{align*}
After reformulating, we get that:
\begin{align*}
|W| > n \cdot \frac{\ell}{|S|} - 1\text{.}
\end{align*}
Since $\pi W \leq n - \sum_{i \in S}r_i$ (this follows from condition \ref{sp:paying-for-selected}) we get that:
\begin{align*}
n - \sum_{i \in S}r_i > \pi n \cdot \frac{\ell}{|S|} - \pi \text{.}
\end{align*}
The remaining part of the proof follows exactly the same way as in the case of matroid constraints. Thus, in either case, we get that there is a candidate $a \in \bigcap_{i \in S} A_i \setminus W$ such that:
\begin{align*}
\pi_a < \frac{|S|}{\ell} \text{.}
\end{align*}

Since each voter from $i \in S$ approves strictly fewer candidates than $\ell$ in $W$ we infer that:
\begin{align*}
 \max(r_i, \max_{c \in W} p_i(c)) \geq \frac{1}{\ell} \text{.}
\end{align*}
Thus, from condition \ref{sp:condition} in the definition of stable-priceability we get for each $a\notin W$ that:
\begin{align*}
\frac{1}{\ell}  \cdot |S| \leq \pi_a \text{.}
\end{align*}
This gives a contradiction and completes the proof.
\end{proof}

The condition for stable-priceability can be easily written as an Integer Linear Program with the number of integer variables bounded by the number of candidates. Further, the ILP can be naturally relaxed so that it finds outcomes that are ``closest to'' stable-priceable. This makes the approach applicable to elections of moderate size.

\section{Extensions of the Model}\label{sec:extensions}

In this section we consider two extensions of our model. We first discuss the case where the preferences of the voters are expressed as general monotone set functions. Second, we discuss certain limitation of our concepts in the case when the candidates cary different weights in the feasibility constraints; we explain how to adapt our concepts to this case.

\subsection{General Monotone Utility Functions}\label{sec:geenral-utility-functions}

In this section, we formulate a stronger version of \Cref{def:ejr} that still is satisfiable. This definition applies much beyond approval ballots. Here, we assume that for each voter $i$ we have a utility function $u_i\colon 2^C \to \reals$ that for each subset of candidates returns a real value. Intuitively, $u_i(W)$ quantifies the level of satisfaction of voter $i$ provided $W$ is selected. We only assume that $u_i$ is monotone, that is for all $X \subseteq Y \subseteq C$ it holds that $u_{i}(Y) \geq u_i(X)$.

Now, we can formulate the axiom of fully justified representation, which generalizes the respective axiom from the literature on committee elections~\citep{pet-pie-sko:c:participatory-budgeting-cardinal}.

\begin{definition}[Base Fully Justified Representation (BFJR)]\label{def:bfjr}
  Given an election $E = (C, N, \feasibles,\mathcal{A})$ we say that a group of voters $S \subseteq N$ is $(\alpha, \beta)$-cohesive, $\alpha, \beta \geq 0$, if for each feasible set $T \in \feasibles$ either there exists $X \subseteq C$ with $|X| = \alpha$ and with $u_i(X) \geq \beta$ for each $i \in S$ such that $T \cup X \in \feasibles$\footnote{Note that if $\alpha=0$ then $T\cup\emptyset\in\feasibles$ and in that case we do not need to consider the or condition where we could potentially divide by 0 when $T=\emptyset$.}, or the following condition is satisfied:
\begin{align*}
\frac{|S|}{n} > \frac{\alpha}{|T| + \alpha} \text{.}
\end{align*}
We say that a feasible outcome $W \in \feasibles$ of an election $E = (C, N, \feasibles,\mathcal{A})$ satisfies \myemph{base fully justified representation (BFJR)} if for each $\alpha, \beta \in \reals$ and each  $(\alpha, \beta)$-cohesive group of voters $S\subseteq{}N$ there exists a voter $i \in S$ such that $u_i(W) \geq \beta$. A selection rule $\calR$ \emph{satisfies} BFJR if for each election $E$ it returns outcomes satisfying BFJR.  \hfill $\lrcorner$
\end{definition}

\Cref{def:bfjr} is strictly stronger than \Cref{def:ejr}: indeed we obtain the definition of BEJR if we additionally require that $\alpha = \beta$. Intuitively, in the definition of BFJR a group of voters $S$ deserves the utility of $\beta$ if for each $T$ they can find a set $X$ of (not too large) size $\alpha$ on which they agree that it has the value of at least $\beta$. In the definition of BEJR, the voters from $S$ must have a stronger agreement; they all must unanimously support every candidate from $X$. 
This definition is still satisfiable.

\begin{theorem}\label{thm:fjr:existence}
For each election with monotone utilities there exists an outcome satisfying base fully justified representation.
\end{theorem}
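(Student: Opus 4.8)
The plan is to prove existence \emph{constructively}, via a greedy cohesive rule (GCR) that services the demands of cohesive groups in decreasing order of the utility they can guarantee. I would maintain a feasible partial outcome $W$, initialized to $\emptyset \in \feasibles$, and a set $U$ of still-unsatisfied voters, initialized to $N$. In each round, among all groups $S \subseteq U$ and all $\alpha$ for which $S$ is $(\alpha,\beta)$-cohesive in the sense of \Cref{def:fjr}, I select one attaining the maximum value $\beta^{*}$; if none exists, I stop and return $W$. Otherwise I extend $W$ to a feasible superset that gives every voter in $S$ utility at least $\beta^{*}$, remove all of $S$ from $U$, and iterate. Since each round removes at least one voter from $U$, the process terminates after at most $n$ rounds, and the returned $W$ is feasible by construction.

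Next I would verify FJR. The rounds produce values $\beta_1 \ge \beta_2 \ge \cdots$, and by monotonicity of the $u_i$ any voter placed in a round of value $\beta_j$ retains utility at least $\beta_j$ in the final $W$. Suppose for contradiction that some $(\alpha,\beta)$-cohesive group $S$ has every member with final utility below $\beta$; then no member of $S$ is ever placed in a round of value at least $\beta$. I claim $S \subseteq U$ throughout: the first time a round would remove an $S$-member, that round still sees $S \subseteq U$, so $S$ is itself an available $(\alpha,\beta)$-cohesive subset of $U$, forcing the round's value to be $\ge \beta$ and thus handing the removed member utility $\ge \beta$ — a contradiction. Hence $S \subseteq U$ at every stage, every round selects value $\ge \beta$, and every round removes only non-$S$ voters. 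As voters are finite, eventually all voters outside $S$ are satisfied while $S$ remains entirely in $U$; the only cohesive groups then available lie inside $S$, so the next round gives some member of $S$ utility $\ge \beta^{*} \ge \beta$, the final contradiction. Specializing to $\alpha=\beta$ recovers \Cref{thm:ejr:existence}.

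The main obstacle is the extension step: I must guarantee that the current feasible $W$ can always be feasibly extended to deliver the selected group $S$ the value $\beta^{*}$. Applying $(\alpha,\beta^{*})$-cohesiveness with $T = W$ yields a set $X$ with $|X|=\alpha$ and $u_i(X)\ge\beta^{*}$ for all $i \in S$, together with either $W \cup X \in \feasibles$ — in which case monotonicity closes the round — or the size inequality $\tfrac{|S|}{n} > \tfrac{\alpha}{|W|+\alpha}$, which offers \emph{no} feasible extension and is the troublesome branch. The plan is to exclude this branch by a counting argument: the inequality is equivalent to $|W| > \alpha\cdot\tfrac{n-|S|}{|S|}$, so it suffices to keep $|W|$ below this threshold when $S$ is processed. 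Since the previously satisfied groups are pairwise disjoint and each contributes a bounded number of candidates, I would charge the candidates of $W$ to their groups' members and bound $|W|$ by $\sum_j \alpha_j$. The difficulty, and the crux for \emph{general} (non-matroid) constraints, is that the greedy ordering by $\beta$ does not control the ratios $\alpha_j/|S_j|$, so a naive charge need not beat the threshold. To repair this I would always certify cohesiveness with the smallest admissible $\alpha$ for $\beta^{*}$, and apply the cohesiveness condition not to $T=W$ but to a maximal feasible subset $T \subseteq W$ for which a feasible extension is still forced (such $T$ exists because the size inequality \emph{fails} once $|T| \le \alpha\cdot\tfrac{n-|S|}{|S|}$, and $\feasibles$ is closed under inclusion). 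The genuinely delicate point — where I expect the real technical effort to lie — is then to show that extending this smaller $T$ rather than $W$ never destroys the utility already guaranteed to previously satisfied voters, so that the FJR bookkeeping of the second paragraph survives the feasibility repair.
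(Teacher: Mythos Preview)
Your partitioning step and your FJR verification are both correct and essentially match the paper: the paper likewise greedily peels off $(\alpha_r,\beta_r)$-cohesive groups $S_r$ with maximal $\beta_r$ (and minimal $\alpha_r$ as tiebreak), and the argument that a violating group $S$ must remain unprocessed until some member receives utility $\ge\beta$ is exactly the paper's closing paragraph.

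The gap is precisely where you flag it, and it is not merely ``delicate'' but fatal to the sequential scheme. Your FJR bookkeeping relies on the fact that once a voter is placed in a round of value $\beta_j$ she retains utility $\ge\beta_j$ in the final outcome; that in turn requires $W$ to grow monotonically. Your repair---extend a proper subset $T\subsetneq W$ instead of $W$---drops candidates and can therefore strictly decrease the utility of previously satisfied voters, so the very invariant your second paragraph uses collapses. For general constraints and general monotone utilities there is no reason the dropped candidates can be restored while keeping $T\cup X$ feasible, so this branch cannot be patched locally.

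The paper avoids the problem by \emph{not} building $W$ in the order dictated by $\beta$. It fixes the partition $S_1,\ldots,S_p$ first, and then proves by induction on the number of active voters that feasible witnesses $W_1,\ldots,W_p$ with $\bigcup_r W_r\in\feasibles$ exist. The inductive step replaces some $S_r$ by inactive voters, obtains $T=\bigcup_{j\neq r}W_j$ from the hypothesis, and applies $(\alpha_r,\beta_r)$-cohesiveness of $S_r$ to this $T$. The crucial insight you are missing is a global counting argument: if the ``bad'' size branch held for \emph{every} $r$, then since $|T|\le\sum_{j\neq r}\alpha_j$ one gets
\[
1=\sum_{r}\frac{|S_r|}{n}>\sum_{r}\frac{\alpha_r}{\sum_j\alpha_j}=1,
\]
a contradiction. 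Hence for \emph{some} $r$ the feasible-extension branch fires, and that $r$ is assembled last. The order of assembly is thus discovered recursively rather than fixed in advance; this is what makes the construction go through for arbitrary (non-matroid) constraints.
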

\begin{proof}
  Given an election $E=(C, N, \feasibles, \mathcal{A})$ we first define the procedure of partitioning voters. In each round $r$ we search for the largest value $\beta_r \geq 0$ for which there exists an $(\alpha_r, \beta_r)$-cohesive group, $\alpha_r \geq 0$; if there are ties, we first prefer a cohesive group with a smaller value of $\alpha_r$. We pick one such a group, call it $S_r$, and remove the voters from $S_r$ from further consideration. We repeat the procedure until all the voters are removed (note that every non-empty group of voters is $(0, 0)$-cohesive, and so the procedure will stop). Thus, we partitioned the set of voters into disjoint sets $S_1, S_2, \ldots, S_{p-1}, S_p$.

We will show now that for each group of voters $S_r$, $r \in [p]$, we can select a set of candidates $W_r$ with $|W_r| \leq \alpha_r$ such that
\begin{inparaenum}[(1)]
\item each set $W_r$ gives the voters from $S_r$ the utility of at least $\beta_r$ (that is $u_i(W_r) \geq \beta_r$ for each $i \in S_r$), and
\item the set $W_1 \cup W_2 \cup \ldots \cup W_p$ is feasible.
\end{inparaenum}

We first fix the number of voters $n$. We will show the above statement by the induction on the number of active voters; the voter is active if it assigns a positive utility to some subset of candidates. Clearly, if all the voters are inactive then the inductive hypothesis holds, which is witnessed by an empty subset of candidates. Now, assume that the hypothesis holds if the number of active voters is strictly lower than $n'$. We will show that it holds also for $n'$.

Consider any set $S_r\in \{S_1, \ldots, S_p\}$ and consider a modified election $E'$ in which we replace each voter from $S_r$ with an inactive voter. Note that except for $S_r$ the partitioning algorithm would return the same groups $S_1, S_2, \ldots, S_{r-1}, S_{r+1}, \ldots, S_p$ as for $E$.
From our inductive assumption, there exists sets $W'=W_1 \cup W_2 \cup \ldots\cup W_{r-1} \cup W_{r+1} \cup \ldots \cup W_p$ such that all voters from $S_i$ get at least the utility of $\beta_i$ from $W_i$ for $i\in [p]\setminus \{r\}$.

For $T=W'$, since $S_r$ is $(\alpha_r, \beta_r)$-cohesive, we know that there exists $X\in \feasibles$ with $|X| = \alpha_r$ that gives $S_r$ the utility of $\beta_r$ such that $X \cup T \in \feasibles$ or
\begin{align*}
\frac{|S_{r}|}{n} > \frac{\alpha_r}{|T| + \alpha_r} \text{.}
\end{align*}
If $X \cup T \in \feasibles$ then we set $W_r = X$; we additionally give empty sets to inactive voters, and we are done. Otherwise, we have that:
\begin{align*}
\frac{|S_{r}|}{n} > \frac{\alpha_r}{|T| + \alpha_r} \geq \frac{\alpha_r}{\alpha_1 + \ldots + \alpha_p}\text{.}
\end{align*}
We repeat this reasoning for each $r \in [p]$, and get that unless we are done:
\begin{equation*}
    1 = \frac{|S_1| + \ldots + |S_{p}|}{n}  > \frac{\alpha_1 + \ldots + \alpha_p}{\alpha_1 + \ldots + \alpha_p} = 1,
\end{equation*}
a contradiction. Hence, there exists a sequence $W_1 \cup \ldots \cup W_p$ that satisfies our condition.

Now, consider an election $E=(C, N, \feasibles, \mathcal{A})$, and let  $W_1, W_2, \ldots, W_p$ be constructed as above. We take $W = W_1 \cup W_2 \cup \ldots \cup W_p$.
It remains to prove that $W$ satisfies BFJR. For the sake of contradiction, assume that there exists an $(\alpha, \beta)$-cohesive group $S$ such that for each voter $i \in S$ we have $u_i(W) < \beta$. Consider the first step in the process of partitioning the voters, when some voter $i \in S$ was deleted. It was deleted as a part of some $(\alpha_r, \beta_r)$-cohesive group $S_r$. Since during the partitioning we always pick the group with the highest $\beta_r$ first, we know that $\beta_r \geq \beta$. From the construction of $W$ we know that $u_i(W) \geq u_i(W_r) \geq \beta_r \geq \beta$. This proves a contradiction, and completes the proof.
\end{proof}

Now let us investigate the relation between BFJR and stable priceability. In the committee setting, stable priceability implies the core \cite{pet-pie-sha-sko:stable-priceability} which is a stronger axiom than BFJR. However, \Cref{thm:stable-priceability-fjr} shows that it is no longer the case in our general model, even if we assume matroid constraints.

\begin{proposition}\label{thm:stable-priceability-fjr}
Stable Priceability with different prices does not imply BFJR for approval committee elections with disjoint attributes even if we assume matroid constraints.
\end{proposition}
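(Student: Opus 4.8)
The goal is to show Stable Priceability with different prices does not imply FJR, even for committee elections with disjoint attributes and approval ballots. Since this is a separation (non-implication) result, I need to construct a concrete counterexample: an election with disjoint-attribute constraints, together with prices and payment functions satisfying all four stable-priceability conditions (SP1)--(SP4), such that the resulting outcome $W$ violates FJR. The key intuition comes from the discussion preceding Theorem~\ref{thm:sp-ejr}: with \emph{different} prices, condition \ref{item:producer-stability} only forces $W$ to maximize total price, which can be achieved by buying a few expensive candidates instead of many cheap ones; this lets an expensive ``decoy'' candidate in one attribute group soak up the budget while a genuinely cohesive group in another attribute group is left underrepresented.

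The plan is to split the candidate set into two attribute groups $C_1, C_2$ with quotas chosen so that exactly one candidate must be selected from $C_1$ and several from $C_2$ (or some similar split that forces a structural trade-off). First I would place in $C_1$ one very popular candidate $c^\star$ approved by (nearly) all voters and assign it a high price; the producer-stability condition \ref{item:producer-stability} then compels any price-maximizing feasible outcome to include $c^\star$, tying up that attribute slot. Then in $C_2$ I would design a cohesive group $S$ (say a constant fraction of the electorate) whose members jointly approve enough $C_2$-candidates that, under Definition~\ref{def:fjr}, $S$ is $(\alpha,\beta)$-cohesive for some $\beta \geq 2$ (so FJR demands some voter in $S$ receives utility at least $\beta$), yet whose approved candidates are \emph{not} the ones selected in $W$. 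The selected $C_2$-candidates would instead be cheaper candidates approved by the remaining voters, chosen so the total price is still maximized and the per-candidate payment stability \ref{sp:condition} holds because the money of $S$ has largely been absorbed paying for the expensive $c^\star$.

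After fixing the instance, I would verify the four conditions in order. Conditions \ref{sp:paying-for-selected} (and SP1) are bookkeeping: I just exhibit payment functions $p_i$ summing to the prices on selected candidates and zero elsewhere, with each voter's total bounded by $1$. Condition \ref{sp:condition} is the delicate one: for each unselected candidate $c \notin W$ I must check $\sum_{i \in N(c)} \max(r_i, \max_{c' \in W} p_i(c')) \leq \pi_c$; here the arithmetic must be arranged so that the residual budgets $r_i$ and the maximal payments of the supporters of the un-bought $C_2$-candidates approved by $S$ are small enough, which is precisely what paying for the expensive $c^\star$ engineers. Condition \ref{item:producer-stability} requires checking that no other feasible set has strictly larger total price --- because the quotas are tight this reduces to a finite comparison, and I would choose $\pi_{c^\star}$ dominant enough that any outcome omitting $c^\star$ is beaten, while among outcomes containing $c^\star$ the chosen $C_2$-candidates tie or win on price.

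The main obstacle I expect is balancing the prices and the budget so that \ref{sp:condition} and \ref{item:producer-stability} hold \emph{simultaneously} while still leaving $S$ underrepresented. Making $c^\star$ expensive helps producer-stability select it and drains the common budget, but if its supporters (who overlap with $S$) spend too much on it, their residual budgets $r_i$ shrink --- which is exactly what I need to keep the left-hand side of \ref{sp:condition} below $\pi_c$ for the $S$-approved candidates in $C_2$ that I leave out. The tension is that the voters in $S$ must have enough collective budget to be $(\alpha,\beta)$-cohesive (a claim about feasibility and fractions, independent of prices) yet too little \emph{effective} willingness-to-pay to destabilize the cheaper $C_2$-outcome. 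I would resolve this by making the cohesive group $S$ consist of voters who all approve the expensive $c^\star$ as well, so their budgets are genuinely consumed by $c^\star$; the disjoint-attribute structure guarantees that buying $c^\star$ in $C_1$ does nothing to satisfy their $C_2$-claim, producing the FJR violation while keeping every stable-priceability condition intact.
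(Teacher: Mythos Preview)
Your plan has a structural gap that prevents it from working as stated. Recall \Cref{thm:sp-ejr}: for matroid constraints (and disjoint attributes is a matroid), every stable-priceable outcome already satisfies EJR. Hence any counterexample separating stable priceability from FJR must exhibit a group $S$ that is $(\alpha,\beta)$-cohesive in the FJR sense but does \emph{not} deserve $\beta$ candidates in the EJR sense. Your construction makes all voters in $S$ ``jointly approve'' the same candidates (the decoy $c^\star$ together with a common block in $C_2$). But whenever the voters in $S$ share a common approval set, FJR-cohesiveness collapses to EJR-deservingness: if $|X|=\alpha$ and $u_i(X)\ge\beta$ for all $i\in S$, then $X' = X\cap\bigcap_{i\in S}A_i$ has $|X'|\ge\beta$, it satisfies $T\cup X'\in\feasibles$ whenever $T\cup X\in\feasibles$, and the fractional condition $|S|/n>\alpha/(|T|+\alpha)$ implies $|S|/n>\beta/(|T|+\beta)$ since $\alpha\ge\beta$. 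So your $S$ would deserve $\beta$ under EJR, and by \Cref{thm:sp-ejr} no stable-priceable $W$ can shortchange it. The construction is therefore impossible to complete along the lines you describe.

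There is also a more local issue with your stability argument. You write that draining the residual budgets $r_i$ via the expensive $c^\star$ is ``exactly what I need to keep the left-hand side of \ref{sp:condition} below $\pi_c$.'' But the left-hand side is $\sum_{i\in N(c)}\max(r_i,\max_{c'\in W}p_i(c'))$, not $\sum r_i$. If each $i\in S$ pours her budget into a single candidate $c^\star$, then $p_i(c^\star)$ is large, so $\max_{c'\in W}p_i(c')$ is large, and the left-hand side does not shrink at all. To make \ref{sp:condition} hold you need the voters in $S$ to spread their payments over \emph{many} elected candidates, not concentrate them on one.

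The paper's proof avoids both pitfalls precisely by exploiting the gap between FJR and EJR. The group $S$ there is split into two halves with \emph{different} approvals in $C_2$ (sets $B$ and $E$), while sharing a small set $A\subseteq C_1$. This heterogeneity is essential: it lets $S$ respond to each $T$ with a tailored $X$ mixing $A$-, $B$-, and $E$-candidates so that every voter in $S$ reaches utility $5$, establishing $(5,5)$-cohesiveness, while under EJR the common approval set $\bigcap_{i\in S}A_i=A$ is too small for $S$ to deserve $5$. Meanwhile the voters in $S$ pay for four elected $A$-candidates (not one), keeping $\max_{c'\in W}p_i(c')$ small enough that \ref{sp:condition} holds for the unelected $B$- and $E$-candidates, each of which is approved by only half of $S$.
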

\begin{proof}
  Consider an instance of approval committee elections with disjoint attributes, $E = (C, N, \feasibles,\mathcal{A})$ such that $C=C_1 \sqcup C_2$ (candidates are split into two separate groups) and $|C_1| \geq 41$, $|C_2| \geq 50$. Feasible sets contain at most $40$ candidates from the first group and at most $40$ from the second group. There are three disjoint groups of voters: group $V_1$ of $\nicefrac{3n}{5}$ approving all the candidates from $C_1$, group $V_2$ of $\nicefrac{n}{3}$ of voters approving all the candidates from $C_2$, and group $S$ of $\nicefrac{n}{15}$ (assume $\nicefrac{n}{15}$ is an even integer) voters approving some $5$ candidates $A=\{a_1, \ldots, a_5\}$ from $C_1$. Besides, half of the voters from $S$ approve some $5$ candidates $B=\{b_1, \ldots, b_5\}$ from $C_2$ and the other half of the voters approve different $5$ candidates $E=\{e_1, \ldots, e_5\}$ from $C_2$.

    Consider now an outcome $W$ containing $4$ candidates from $A\setminus \{a_5\}$, $36$ other candidates from $C_1\setminus \{a_5\}$, and $40$ candidates from $C_2\setminus B \setminus E$. We will first show that this outcome is stable priceable. Let the price for the candidates from $C_1$ be $\pi_1=\nicefrac{n}{60}$ and the price for the candidates from $C_2$ be $\pi_2=\nicefrac{n}{120}$. It is clear that with such prices, outcome $W$ (and every other outcome with $40$ candidates from $C_1$ and $40$ candidates from $C_2$) satisfies \ref{item:producer-stability}.

    Voters from $S$ spend all their money on candidates from $A\setminus \{a_5\}$ (indeed: $4\cdot \nicefrac{n}{60}=\nicefrac{n}{15}=|S|$, the remaining voters spend their money on their approved candidates from $W$ in any way so that each elected candidate is paid her price (it is possible, since $36\cdot \nicefrac{n}{60} = \nicefrac{3n}{5} = |V_1|$ and $40\cdot \nicefrac{n}{120}=\nicefrac{n}{3}=|V_2|$). Now we can see that $W$ is stable: indeed, voters from neither $V_1$ nor $V_2$ have no possibility to improve their satisfaction from the committee. Voters from $S$ do not have enough money to buy the fifth candidate from $A$. Besides, even after resigning from paying for $A\setminus \{a_5\}$, they do not have enough money to improve their satisfaction by buying candidates from $B\cup E$.

    We will now show that in any committee satisfying BFJR, some member of group $S$ should get at least $5$ representatives.

    Consider any committee $T\in \feasibles$. If $T$ contains less than $36$ candidates from $C_1$ then clearly group $S$ can propose a committee $X=A$ satisfying $X\cup T\in \feasibles$. If $T$ contains some $35+x$ candidates from $C_1$ for $x\in [5]$ but less than $41-2x$ candidates from $C_2$ then voters from $S$ can propose a committee $X$ containing $5-x$ candidates from $A$, $x$ candidates from $B$ and $x$ candidates from $E$ satisfying $X\cup T\in \feasibles$. Consider now a committee $T$ containing some $35+x$ candidates from $C_1$ (for $x\in [5]$) and at least $41-2x$ candidates from $C_2$ (hence $|T| \geq 76-x \geq 71$). Now consider $X=A$. We have that:
    \begin{equation*}
        \frac{|X|}{|T|+|X|}\cdot n \leq \frac{5}{71+5}\cdot n < \frac{n}{15} = |S|,    \end{equation*}
    which shows that $S$ indeed should get $5$ representatives in any committee satisfying BFJR which completes the proof.
\end{proof}

\subsection{Weighted Candidates}\label{sec:weighted-candidates}

It is worth noting that our definitions and the analysis so far implicitly assumed that all the candidates are treated equally, irrespectively of their impact on the feasibility constraints. Specifically, in Inequality~\eqref{eq:ejr_condition} in \Cref{def:ejr} we were only concerned with the number of candidates in the set $T$; however, some of these candidates can restrict the feasible solutions much more than the others. The classic model where this is the case is the one of participatory budgeting (PB)~\cite{pet-pie-sko:c:participatory-budgeting-cardinal}---there, the candidates have weights, and there is a single constraint specifying that the total weight of the selected candidates is lower than or equal to the given budget value. In such cases it might be justified to include the weights of the candidates in the definitions of the axioms, as it is done in the work of~\citet{pet-pie-sko:c:participatory-budgeting-cardinal}.

In this section we are considering the following addition to the original model. For each candidate $c \in C$ assume we are given a weight $\weight(c) \in \reals_{+}$; for a subset of candidates $W \subseteq C$ we let $\weight(W) = \sum_{c \in W} \weight(c)$. Intuitively, the weights of the candidates should in some way correspond to the feasibility constraints, however this is not formally required. In this case, we write the definition of Base Extended Justified Representation as follows.

\begin{definition}[Weighted Base Extended Justified Representation ($\weight$-BEJR)]\label{def:wbejr}
  Given an election $E = (C, N, \feasibles, \mathcal{A})$ we say that a group of voters $S \subseteq N$ is \myemph{$(\alpha, \beta)$-strongly cohesive}, $\alpha>0$, $\beta \geq 0$, if for each feasible set $T \in \feasibles$ either there exists $X \subseteq \bigcap_{i \in S} A_i$ with $\weight(X) \leq \alpha$ and $|X| \geq \beta$ such that $T \cup X \in \feasibles$, or the following condition is satisfied:
\begin{align*}
 \frac{|S|}{n} > \frac{\alpha}{\weight(T) + \alpha} \text{.}
\end{align*}
We say that a feasible outcome $W \in \feasibles$ of an election $E = (C, N, \feasibles, \mathcal{A})$ satisfies \myemph{weighted base extended justified representation ($\weight$-BEJR)} if for each $\alpha>0, \beta\geq 0$ and each $(\alpha, \beta)$-strongly cohesive group of voters $S\subseteq{}N$ there exists a voter $i \in S$ such that $|W \cap A_i| \geq \beta$.
\end{definition}

If we additionally restrain $T$ to be a subset of $W$, then we say that $S$ is $(\alpha, \beta)$-strongly cohesive in $W$.
Analogously, we extend the definition of EJR to the case of weighted candidates. The easiest formulation is obtained by modifying \Cref{def:wbejr} so that we consider strongly cohesive groups in $W$. Below we also provide an alternative equivalent formulation.

\begin{definition}[Weighted Extended Justified Representation ($\weight$-EJR)]\label{def:wejr}
  Given an election $E = (C, N, \feasibles, \mathcal{A})$ and a real $\alpha>0$, we say that a group of voters $S \subseteq N$ \myemph{$\alpha$-deviates} in some $W\in\feasibles$ if for $\ell=\max_{i \in S} u_i(W) + 1$ and for each set $T\subseteq W$ either there exists $X \subseteq \bigcap_{i\in S} A_i$ with $\weight(X) \leq \alpha$ and $|X|\geq \ell$ such that $T \cup X \in \feasibles$, or the following inequality holds:
\begin{align}\label{eq:wtejr_condition}
  \frac{|S|}{n} > \frac{\alpha}{\weight(T) + \alpha} \text{.}
\end{align}
We say that a feasible outcome $W \in \feasibles$ of an election $E = (C, N, \feasibles,\mathcal{A})$ satisfies \myemph{weighted extended justified representation ($\weight$-EJR)} if no group of voters $S\subseteq N$ $\alpha$-deviates in $W$ for any $\alpha>0$.
\hfill $\lrcorner$
\end{definition}

We analogously extend the definitions of fully justified representation (FJR) and proportional justified representation (PJR) to the case of weighted candidates.
We will also say that a group of voters $S$ \myemph{strongly deserves} $\beta$ candidates (in $W$) if this group is $(\alpha, \beta)$-strongly cohesive (in $W$) for some $\alpha>0$.

It is known that in the model with weights Proportional Approval Voting (PAV) fails EJR and PJR, even in approximation~\cite{pet-pie-sko:c:participatory-budgeting-cardinal}. The approach based on priceability, on the other hand, provides more positive results. For Phragm\'en's Sequential Method it suffices to assume that the costs of the candidates that need to be paid by the voters equal to their weights. We will show that while, such defined rule fails all our axioms (PJR, and consequently EJR, and FJR), it provides certain approximate guarantees. The method provides particularly strong guarantees to small cohesive groups. For instance, a cohesive group of the size of 10\% of the population, is guaranteed roughly 0.9 of PJR. It is worth noting that the approximation result works also for non-matroid constraints.
The following theorem gives the guarantee already for the $\weight$-BEJR.

\begin{theorem}\label{thm:phragmen-and_pjr_weighted}
  For weighted candidates Phragm\'en's Sequential Method selects an outcome $W$ such that for each a group of voters $S \subseteq N$ strongly deserving $\beta$ candidates in $W$ we have:
    \begin{align*}
    \left|W \cap \left(\bigcup_{i\in S} A_i\right) \right| \geq \left\lfloor \beta \cdot \frac{n- |S|}{n} \right\rfloor\text{.}
    \end{align*}
\end{theorem}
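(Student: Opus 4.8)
The plan is to adapt the argument from the first part of \Cref{thm:phragmen-and_pjr} to the weighted (and possibly non-matroid) setting, tracking carefully where the loss factor $(n-|S|)/n$ and the floor enter. Since $S$ deserves $\beta$ candidates, fix $\alpha$ for which $S$ is $(\alpha,\beta)$-strongly cohesive, and assume towards a contradiction that $W$ contains fewer than $\lfloor \beta \cdot (n-|S|)/n\rfloor$ candidates from $\bigcup_{i\in S}A_i$. As in the matroid proof, I would introduce the critical time $t^*$, the first moment at which every candidate of $\bigcap_{i\in S}A_i$ has been purchased or removed. The central structural fact is again that for every $t<t^*$ some commonly approved candidate is still available; since all of $S$ supports it, as soon as the voters of $S$ jointly accumulate its price it is bought, which caps how much money $S$ can hoard before $t^*$. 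This lets me lower-bound the money that $S$ actually spends on candidates of $\bigcup_{i\in S}A_i$ in terms of $t^*|S|$.

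The second step bounds $t^*$ via the cohesiveness dichotomy applied to $T^* = W^*\setminus\bigcup_{i\in S}A_i$, where $W^*$ is the set purchased by time $t^*$. The candidates of $T^*$ are approved by nobody in $S$, so they are paid for entirely by $N\setminus S$, giving $\weight(T^*)\le t^*(n-|S|)$. The cohesiveness condition guarantees a commonly approved $X^*\subseteq\bigcap_{i\in S}A_i$ with $\weight(X^*)\le\alpha$, $|X^*|\ge\beta$, and either $T^*\cup X^*\in\feasibles$, or the inequality $\frac{|S|}{n}>\frac{\alpha}{\weight(T^*)+\alpha}$. In the latter branch the inequality rearranges to $\weight(T^*)>\alpha(n-|S|)/|S|$, which combined with $\weight(T^*)\le t^*(n-|S|)$ forces $t^*>\alpha/|S|$; hence $S$ has earned more than $\alpha$ dollars and, by the capping fact, must have spent essentially all of it on commonly approved candidates. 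In the former branch the whole of $X^*$ is already elected or removed at $t^*$, and I would argue that enough of it must actually have been elected.

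The third step, which I expect to be the genuine obstacle, is converting the lower bound on the money $S$ spends into a lower bound on the \emph{number} of purchased candidates from $\bigcup_{i\in S}A_i$. In \Cref{thm:phragmen-and_pjr} one dollar buys exactly one unit-cost candidate, so spending translates directly into a count; here money only controls total weight, so I must bound the prices of the purchased candidates. I would do this through the cohesive set, whose $\ge\beta$ members have total weight $\le\alpha$ and hence small average price, so that the candidates $S$ buys while a commonly approved candidate remains available are cheap enough to guarantee a large count. The loss incurred in this weight-to-count conversion, together with the interference of the $N\setminus S$ voters captured by $\weight(T^*)\le t^*(n-|S|)$, is what produces the multiplicative factor $(n-|S|)/n$, and rounding to an integer number of candidates yields the floor.

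The hardest single point is handling removals in the non-matroid case: a commonly approved candidate may be removed rather than bought, and without the exchange property (\Cref{lem:one-swap}) I cannot simply swap it in as was done in \Cref{thm:phragmen-and_pjr}. The argument must therefore account for removed commonly approved candidates directly when counting, and it is precisely this inability to recover the removed candidates that prevents the exact bound and leaves only the approximate guarantee stated in the theorem.
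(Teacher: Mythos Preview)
Your plan has a genuine gap precisely where you flag the difficulty. You define $t^*$ exactly as in the matroid proof and apply cohesiveness to $T^*=W^*\setminus\bigcup_{i\in S}A_i$; in the feasibility branch you then need that ``enough of $X^*$ has actually been elected,'' but nothing forces this. A candidate $c\in X^*$ is removed when $W^*\cup\{c\}\notin\feasibles$, whereas $T^*\cup X^*\in\feasibles$ only tells you about $T^*\cup\{c\}$; since $W^*$ properly contains $T^*$, and without the exchange property, there is no way to pass from one to the other. The branch cannot be closed along these lines. Your inequality branch has a related problem: you conclude $t^*>\alpha/|S|$, so $S$ has earned more than $\alpha$ dollars, but your ``capping fact'' only guarantees that \emph{some} commonly approved candidate is available before $t^*$, not that a full witness set of $\ge\beta$ candidates with total weight $\le\alpha$ is. So your weight-to-count conversion has nothing to anchor to.

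The paper eliminates both difficulties by changing the critical time and the set to which cohesiveness is applied. It takes $t$ to be the first moment at which \emph{every} witness set $A\subseteq\bigcap_{i\in S}A_i$ with $|A|\ge\beta$ and $\weight(A)\le\alpha$ already contains a removed candidate, and applies the dichotomy to $T=W$, the full outcome purchased by time $t$. By downward closure of $\feasibles$, at time $t$ no such $X$ can satisfy $W\cup X\in\feasibles$ (any such $X$ would be a witness set with no removed candidate), so only the inequality branch survives; combined with the crude bound $\weight(W)\le nt$ this yields the contradiction in one line. The count $\lfloor\beta(n-|S|)/n\rfloor$ then comes from the complementary fact built into this definition of $t$: at every moment \emph{before} $t$ some entire witness set $A$ of $\ge\beta$ candidates and total weight $\le\alpha$ is still intact (each member bought or available), and this intact pool is exactly what converts the $\alpha(n-|S|)/n$ dollars that $S$ has earned into the required number of purchased candidates.
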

\begin{proof}
  Consider an $(\alpha, \beta)$-strongly cohesive group of voters $S\subseteq N$. Towards a contradiction, assume that Phragm\'en's Sequential Method selects fewer than $\left\lfloor \beta \cdot \frac{n- |S|}{n} \right\rfloor$ candidates from $\bigcup_{i\in S}A_i$. Note that during the execution of the Phragm\'en's method, when a candidate $c \in \bigcap_{i\in S}A_i$ is not removed nor selected, then the voters from $S$ will pay no more than $\weight(c)$ in total for any candidate (as otherwise they would prefer to buy $c$). Let $t$ be the first moment when at least one candidate has been removed from each subset $A \subseteq \bigcap_{i\in S}A_i$ with $|A| \geq \beta$ and $\weight(A) \leq \alpha$. Note that $t \leq \nicefrac{n- |S|}{n} \cdot \nicefrac{\alpha}{|S|}$. Indeed, if $t > \nicefrac{n- |S|}{n} \cdot \nicefrac{\alpha}{|S|}$, then at time $\nicefrac{n- |S|}{n} \cdot \nicefrac{\alpha}{|S|}$ the group $S$ would collect in total $\alpha \cdot \nicefrac{n- |S|}{n}$ dollars. Further, at this time moment there would be a set  $A \subseteq \bigcap_{i\in S}A_i$ with $|A| \geq \beta$ and $\weight(A) \leq \alpha$ such that all candidates from $A$ would be either bought, or available for being bought. Thus, the voters from $S$ would buy at least $\left\lfloor \beta \cdot \frac{n- |S|}{n} \right\rfloor$ candidates.

     Let $W$ denote an outcome purchased at time $t$. Since the voters could spend at most $nt$ dollars on candidates from $W$, we get that:
    \begin{equation*}
        \weight(W) \leq n\cdot \frac{\alpha}{|S|} \cdot \frac{n- |S|}{n}\text{.}
    \end{equation*}

  Since $S$ is $(\alpha, \beta)$-strongly cohesive, and as the existence of $X$ such that $X \subseteq \bigcap_{i\in S}A_i$ with $|X| \geq \beta$, $\weight(X) \leq \alpha$, and $W\cup X\in\feasibles$ is not possible at time moment $t$, the following inequality holds:
    \begin{equation*}
        \frac{|S|}{n} > \frac{\alpha}{\weight(W) + \alpha} \geq \frac{\alpha}{n\cdot \frac{\alpha}{|S|} \cdot \frac{n- |S|}{n}+ \alpha} = \frac{|S|}{n},
    \end{equation*}
    a contradiction. This completes the proof.
\end{proof}

\begin{proposition}
  For weighted candidates Phragm\'en's Sequential Method may fail $\weight$-BPJR.
\end{proposition}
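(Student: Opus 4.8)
The plan is to exhibit one small election with genuinely non-uniform weights and a non-matroid feasibility family in which a cohesive group deserves $2$ candidates yet receives only $1$. The key point I would insist on is that the group must \emph{fail} to be entitled to $2$ candidates under unit weights, so that the failure is a genuine weight phenomenon and not a restatement of the non-matroid half of \Cref{thm:phragmen-and_pjr}. Concretely, I would use three candidates: one broadly acceptable candidate $p$ of weight $3$ and two ``private'' candidates $y_1,y_2$ of weight $2$ each, with feasibility family the downward closure of $\{p\}$ and $\{y_1,y_2\}$; that is, $W\in\feasibles$ iff $p\notin W$ or $W\cap\{y_1,y_2\}=\emptyset$, so selecting $p$ forbids every $y_i$ and conversely. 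Taking $n=10$, let a group $S$ of $6$ voters approve $\{p,y_1,y_2\}$ and the remaining $4$ voters approve only $\{p\}$. This family is not a matroid, witnessed by $X=\{p\}$, $Y=\{y_1,y_2\}$ with $|X|<|Y|$ and no $c\in Y\setminus X$ admitting $X\cup\{c\}\in\feasibles$, so \Cref{thm:phragmen-and_pjr} does not apply.

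Next I would verify that $S$ deserves $2$ candidates, witnessed by $X=\{y_1,y_2\}$ (feasible, approved by all of $S$, with $|X|=2$ and $\weight(X)=4$, so $\alpha=4$). For every feasible $T$ not containing $p$ the set $T\cup X$ is feasible, so strong cohesiveness holds immediately; the only binding case is $T=\{p\}$, where $T\cup X$ is infeasible and
\begin{align*}
\frac{|S|}{n}=\frac{3}{5}>\frac{4}{7}=\frac{\alpha}{\weight(\{p\})+\alpha}\text{,}
\end{align*}
so $S$ is $(4,2)$-strongly cohesive and deserves $2$ candidates. The weight of $p$ is decisive here: had all weights been $1$, the binding inequality would read $3/5>2/3$, which is false, so $S$ would deserve only $1$ candidate and no violation would arise. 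Thus the heaviness of $p$ relative to $y_1,y_2$ is exactly what upgrades the group's entitlement from $1$ to $2$.

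Then I would run Phragm\'en's method with costs equal to weights. Since $p$ is approved by all $10$ voters while each $y_i$ is approved only by the $6$ voters of $S$, the supporters of $p$ accumulate its price $3$ at time $3/10$, strictly before the supporters of $y_1$ accumulate its price $2$ at time $2/6=1/3$. Hence $p$ is bought first; at that instant $\{p,y_i\}$ becomes infeasible, both $y_1,y_2$ are removed, and the method terminates with $W=\{p\}$. Therefore $\bigl|W\cap\bigcup_{i\in S}A_i\bigr|=1<2$, violating the weighted PJR requirement for $S$ and proving the proposition. This is consistent with \Cref{thm:phragmen-and_pjr_weighted}, whose guarantee here is only $\lfloor 2\cdot\tfrac{n-|S|}{n}\rfloor=\lfloor 2\cdot\tfrac{2}{5}\rfloor=0$, so the example certifies that the theorem's bound cannot be strengthened to exact PJR.

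The main obstacle is reconciling two opposing demands on $\weight(p)$. For the failure, Phragm\'en must purchase $p$ before any $y_i$, which by the purchase-time comparison needs $\weight(p)/n<\weight(y_i)/|S|$ (so $p$ must not be too heavy); for the entitlement, the cohesiveness inequality at $T=\{p\}$ needs $|S|/n>\alpha/(\weight(p)+\alpha)$ (so $p$ must be heavy enough). These windows overlap precisely when $n/2<|S|<\tfrac23 n$, forcing the group size strictly between a half and two thirds and the weight into the band $\weight(p)\in(8/3,10/3)$, realised by $\weight(p)=3$; checking that this band is nonempty and simultaneously pushes the unit-weight entitlement below $2$ is the only real content, the rest being a direct check over the five feasible sets. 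The same template with $\beta$ private candidates and $\weight(p)$ scaled so that $\tfrac{\beta-1}{\beta}<|S|/n<\tfrac{\beta}{\beta+1}$ yields a group deserving $\beta$ candidates but awarded only $p$, giving deficits that grow with $\beta$ and trace the floor of \Cref{thm:phragmen-and_pjr_weighted}.
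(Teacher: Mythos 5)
Your proof is correct and takes essentially the same route as the paper's: a heavy, unanimously approved candidate is bought first by Phragm\'en's method and, via the non-matroid constraints, its purchase removes the cheaper candidates of a strongly cohesive group, whose entitlement is certified through the weight inequality $\frac{|S|}{n} > \frac{\alpha}{\weight(T)+\alpha}$. Your three-candidate instance (with the welcome sanity check that the entitlement vanishes under unit weights, so the failure is genuinely a weight phenomenon) is a minimized single-gadget version of the paper's construction, which uses $100$ parallel groups with a blocker $a_i$ of weight $2+\epsilon$ shutting out unit-weight candidates $b_i, c_i, d_i$ to produce a deficit of $120$ deserved versus $100$ received.
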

\begin{proof}
  Consider the following election. The candidates are divided into 100 disjoint groups, $C_1, C_2, \ldots, C_{100}$. In each group $C_i$ there are four candidates: $a_i$ with the cost equal to $2 + \epsilon$ for $\epsilon>0$ and $b_i$, $c_i$, $d_i$ with the costs equal to 1, each. The feasibility constraints are the following: for each group $C_i$ the total cost of the candidates selected from $C_i$ cannot exceed 3. Thus, from each group $C_i$ we can select either $a_i$ or $b_i$, $c_i$, and $d_i$. Let $A,B,C,D$ be a set of all $a_i, b_i, c_i, d_i$ candidates, respectively.

Consider a group $S$ consisting of $(50-\epsilon)$\% voters. The voters from $S$ all approve $B\cup C\cup D$. Additionally, all the voters (including those from $S$) approve $A$. It is straightforward to check that Phragm\'en's Sequential Method will select $A$ only; in total 100 candidates. However, we will show that the group $S$ is $(120, 120)$-strongly cohesive.

Indeed, if $T$ contains possible candidates (either $a_i$ or all of $b_i,c_i,d_i$) for at most 60 $i\in[100]$, then the group $S$ can easily point 120 candidates that together with $T$ make a feasible set. Otherwise, $\weight(T) > 122$, and so for sufficiently small $\epsilon$ it holds that:
\begin{align*}
\frac{|S|}{n} > \frac{120}{\weight(T) + 120} = \frac{120}{122 + 120} \text{.}
\end{align*}
Thus, the group $S$ should approve in total at least 120 candidates. Hence, base PJR is failed.
\end{proof}

For stable-priceability (\Cref{sec:stable-priceable}) a few more adaptations need to be made:
\begin{enumerate}
\item The assumption that the prices of the candidates are all equal would now correspond to the assumption that the prices are proportional to the candidates' costs.
\begin{align}
  \frac{\pi_c}{\weight(c)} = \frac{\pi_{c'}}{\weight(c')} \text{.} \label{eq:sp:prices}
\end{align}
\item Further, condition \ref{item:producer-stability} might be too restrictive (especially if the prices of the candidates are fixed). Indeed, such condition could itself imply a unique outcome, independently of the voters' preferences. Thus, we propose to replace it with the condition of exhaustiveness \hyperref[item:producer-stability-exh]{(SP4$^{*}$)}.
\end{enumerate}

\begin{theorem}\label{thm:sp-ejr_weighted}
 Let $W$ be a stable-priceable outcome for weighted candidates, assuming prices are proportional to the candidates' costs and the exhaustiveness of $W$. Then, for each a group of voters $S \subseteq N$ deserving $\beta$ candidates there exists a voter $i \in S$ with:
    \begin{align*}
    \left|W \cap A_i\right| \geq \left\lfloor \beta \cdot \frac{n- |S|}{n} \right\rfloor\text{.}
    \end{align*}
\end{theorem}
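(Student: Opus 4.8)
The plan is to follow the template of the common-price case of \Cref{thm:sp-ejr}, adapted to weights and with producer-stability replaced by exhaustiveness. Write $\pi_c = \lambda\,\weight(c)$ for the common price-to-cost ratio $\lambda > 0$, let $W$ be the outcome together with its prices $\{\pi_c\}$ and payments $\{p_i\}$, and set $\gamma = \beta\cdot\frac{n-|S|}{n}$. Towards a contradiction, suppose every voter $i \in S$ satisfies $|W\cap A_i| < \gamma$.

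First I would locate a cheap, commonly-approved, not-selected candidate. Applying $(\alpha,\beta)$-strong cohesiveness to $T = W$ yields $X \subseteq \bigcap_{i\in S}A_i$ with $\weight(X) \le \alpha$ and $|X| \ge \beta$ such that $W\cup X \in \feasibles$ or $\frac{|S|}{n} > \frac{\alpha}{\weight(W)+\alpha}$. If $X \subseteq W$, then every $i\in S$ has $|W\cap A_i| \ge |X| \ge \beta \ge \gamma$, contradicting the assumption; hence $X \not\subseteq W$, and by exhaustiveness together with closure of $\feasibles$ under inclusion we get $W\cup X \notin \feasibles$. So the second alternative holds, giving $\weight(W) > \alpha\cdot\frac{n-|S|}{|S|}$. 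Moreover, since each $i\in S$ approves fewer than $\gamma$ members of $W$ and $X \subseteq A_i$, we have $|X\setminus W| = |X| - |X\cap W| > \beta - \gamma = \beta\cdot\frac{|S|}{n}$; taking $a$ to be a cheapest candidate of $X\setminus W \subseteq \bigcap_{i\in S}A_i \setminus W$ then gives $\weight(a) \le \frac{\weight(X)}{|X\setminus W|} < \frac{\alpha n}{\beta|S|}$, hence $\pi_a < \lambda\cdot\frac{\alpha n}{\beta|S|}$.

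Next I would bring in the market conditions. By \ref{sp:paying-for-selected} the total spending equals $\lambda\,\weight(W) = \sum_{c\in W}\pi_c = n - \sum_{i\in N} r_i \le n$, which, combined with $\weight(W) > \alpha\frac{n-|S|}{|S|}$, controls $\lambda$ and the unspent budgets $\sum_i r_i$. On the other side, since $a\notin W$ and $S \subseteq N(a)$, condition \ref{sp:condition} gives $\sum_{i\in S}\max\!\left(r_i, \max_{c\in W}p_i(c)\right) \le \pi_a$. For each $i\in S$, writing $k_i = |W\cap A_i| < \gamma$ and spreading the spending $1-r_i$ over the at most $k_i$ candidates she pays for (she pays only for approved, selected candidates), a pigeonhole estimate gives $\max\!\left(r_i, \max_{c\in W}p_i(c)\right) \ge \frac{1}{k_i+1} \ge \frac{1}{\gamma+1}$. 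Summing over $S$ and comparing with the upper bound $\pi_a < \lambda\frac{\alpha n}{\beta|S|}$ together with the spending identity should force a strict inequality in the wrong direction, i.e. the desired contradiction.

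The main obstacle is the tight bookkeeping of money in this last step: one must reconcile the lower bound on $\weight(W)$ (equivalently, a lower bound on total spending and hence an upper bound on $\sum_i r_i$), the upper bound on $\pi_a$, and the \ref{sp:condition}-bound $\sum_{i\in S}\max(\cdots) \le \pi_a$ so that the factor $\frac{n-|S|}{n}$ and the $k_i \mapsto \frac{1}{k_i+1}$ pigeonhole constant line up exactly. This mirrors, and is the delicate point of, the corresponding computation in \Cref{thm:sp-ejr}, and it is precisely why the weighted guarantee is the scaled quantity $\beta\cdot\frac{n-|S|}{n}$ rather than $\beta$ itself.
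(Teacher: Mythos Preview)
Your outline correctly reproduces the first half of the argument (exhaustiveness plus cohesiveness applied to $T=W$ gives $\weight(W) > \alpha\frac{n-|S|}{|S|}$, hence a bound on $\lambda$), but the second half—the ``tight bookkeeping'' you flag as the obstacle—does not close with the estimates you set up.

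Concretely, your upper bound on the single candidate $a$ is $\pi_a < \lambda\cdot\frac{\alpha n}{\beta|S|} < \frac{n|S|}{\alpha(n-|S|)}\cdot\frac{\alpha n}{\beta|S|} = \frac{n}{\gamma}$, while your lower bound via \ref{sp:condition} and pigeonhole is $\pi_a \ge \sum_{i\in S}\frac{1}{k_i+1} \ge \frac{|S|}{\gamma+1}$. These two bounds are perfectly compatible: since $|S|\le n$ and $\frac{1}{\gamma+1} < \frac{1}{\gamma}$, we always have $\frac{|S|}{\gamma+1} < \frac{n}{\gamma}$. (Take $\beta=1$, $|S|=n/2$: then $\gamma=\tfrac12$, the upper bound is $2n$, the lower bound is $n/3$.) Averaging over $X\setminus W$ to isolate one cheap candidate throws away exactly the factor $|X\setminus W|$ that you would need on the lower side, so no amount of bookkeeping with $\sum_i r_i$ will recover it.

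The paper avoids this loss by working with the \emph{total} price $\pi_X$ of the whole set $X$ rather than a single candidate. On the upper side one gets directly $\pi_X = \lambda\,\weight(X) \le \lambda\alpha < \frac{n|S|}{n-|S|}$. On the lower side one sums the \ref{sp:condition}-inequality over all $c\in X\setminus W$ (there are at least $\beta - z$ of them, with $z=|X\cap W|$) and additionally keeps track of $\pi_z$, the money the voters in $S$ have already paid for $X\cap W$; this yields $\pi_X \ge \frac{\beta-z}{\gamma-z}\,|S|$, and since $\frac{\beta-z}{\gamma-z} \ge \frac{\beta}{\gamma} = \frac{n}{n-|S|}$ one obtains the contradiction. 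The aggregation over the full set $X$ and the explicit accounting for the overlap $X\cap W$ via $z$ and $\pi_z$ are the missing ideas in your proposal.
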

\begin{proof}
  Let $W$ be a stable-priceable outcome. Consider an $(\alpha, \beta)$-strongly cohesive group of voters $S\subseteq N$, and towards a contradiction, assume that each voter from $S$ approves fewer than $\left\lfloor \beta \cdot \frac{n- |S|}{n} \right\rfloor$ candidates from $W$. If $\beta=0$ or $S=N$ then the statement is trivially true, so we can assume that $\beta\geq 1$ and $S\subset N$.

  Since $W$ is exhaustive, and since $S$ is $(\alpha, \beta)$-strongly cohesive for some $\alpha>0$, $\beta\geq 0$, we get that:
\begin{align*}
\frac{|S|}{n} > \frac{\alpha}{\weight(W) + \alpha} \text{.}
\end{align*}
After reformulating, we get that:
\begin{align*}
\weight(W) > \alpha \cdot \frac{n - |S|}{|S|} \text{.}
\end{align*}
For each $c\in C$ by \eqref{eq:sp:prices} we have:
\begin{align*}
\frac{\weight(c)}{\pi_c} \geq \frac{\weight(W)} {\pi_W} \text{.}
\end{align*}
Thus, we get that:
\begin{align*}
\frac{\weight(c)}{\pi_c} > \frac{\alpha}{\pi_W} \cdot \frac{n - |S|}{|S|} \geq \frac{\alpha}{n} \cdot \frac{n - |S|}{|S|} \text{.}
\end{align*}
Further, since there exists $X \subseteq \bigcap_{i \in S} A_i$ with $|X| \geq \beta$, $\weight(X) \leq \alpha$, we get that:
\begin{align*}
\alpha \geq \sum_{c \in X}\weight(c) > \sum_{c \in X}\pi_c \cdot \frac{\alpha}{n} \cdot \frac{n - |S|}{|S|} = \pi_X \cdot \frac{\alpha}{n} \cdot \frac{n - |S|}{|S|} \text{.}
\end{align*}
This is equivalent to:
\begin{align}\label{eq:sp_prices_upper_bound}
\pi_X < \frac{n|S|}{n - |S|} \text{.}
\end{align}

Let $z = |X \cap W|$ and let $\pi_z$ denote the total price of the candidates from $X \cap W$ paid by the voters from $S$.
Let $\zeta_i$ denote the amount of money that voter $i \in S$ has excluding the amount of money that the voter paid for the candidates from $X \cap W$. We have:
\begin{align*}
\sum_{i \in S} \zeta_i = |S| - \pi_z
\end{align*}
By our assumption each voter from $S$ approves strictly fewer than $\beta \cdot \frac{n- |S|}{n}$ candidates. In particular, each such a voter approves at most $\beta \cdot \frac{n- |S|}{n} - z - 1$ candidates from $W \setminus X$. 
This means that for each $i\in S$, there exists a candidate $c'\in W\setminus X$ for which $p_i(c')\ge \frac{\zeta_i}{\beta \cdot \frac{n- |S|}{n} - z}$ or $r_i\ge \frac{\zeta_i}{\beta \cdot \frac{n- |S|}{n} - z}$.
Thus, from condition \ref{sp:condition} in the definition of stable-priceability we get that for each $c \in X \setminus W$ 
\begin{align}\label{eq:sp_prices_lower_bound1}
\pi_c \geq \frac{|S| - \pi_z}{\beta \cdot \frac{n- |S|}{n} - z} \text{.}
\end{align}
As there is $c'\in X\cap W$ such that $\pi_{c'}\ge \frac{\pi_z}{z}$, using \ref{sp:condition}, we get
that for each $c \in X \setminus W$ it holds that:
\begin{align}\label{eq:sp_prices_lower_bound2}
\pi_c \geq \frac{\pi_z}{z} \text{.}
\end{align}
Now, let us consider two cases. First, assume that $\frac{\beta \pi_z}{z} \geq \frac{n|S|}{n - |S|}$. In this case, we use \eqref{eq:sp_prices_lower_bound2} and get that:
\begin{align*}
\sum_{c \in X} \pi_c  &= \sum_{c \in X \setminus W} \pi_c + \pi_z \geq (\beta - z) \cdot  \frac{\pi_z}{z} + \pi_z \\
&=  \frac{\beta \pi_z - z\pi_z + z \pi_z}{z} = \frac{\beta \pi_z}{z} \geq \frac{n|S|}{n - |S|}\text{.}
\end{align*}
This gives a contradiction with \eqref{eq:sp_prices_upper_bound}. Now assume the opposite case, that is: $\frac{\beta \pi_z}{z} < \frac{n|S|}{n - |S|}$. 
Here we use \eqref{eq:sp_prices_lower_bound1}, and get:
\begin{align*}
\sum_{c \in X} \pi_c  &= \sum_{c \in X \setminus W} \pi_c + \pi_z \geq  (\beta - z) \frac{|S| - \pi_z}{\beta \cdot \frac{n- |S|}{n} - z} + \pi_z \\
&= \frac{\beta |S| - |S| z - \beta \pi_z + z \pi_z + \beta \pi_z - \beta \pi_z \cdot \frac{|S|}{n} - z \pi_z}{\beta \cdot \frac{n- |S|}{n} - z}  \\
&= \frac{\beta |S| - |S| z - \beta \pi_z \cdot \frac{|S|}{n}}{\beta \cdot \frac{n- |S|}{n} - z}  > \frac{\beta |S| - |S| z - z \cdot \frac{n|S|}{n - |S|} \cdot \frac{|S|}{n}}{\beta \cdot \frac{n- |S|}{n} - z} \\
&= |S| \cdot \frac{\beta - z - \frac{z |S|}{n - |S|}}{\beta \cdot \frac{n- |S|}{n} - z} = |S| \cdot \frac{\beta n - \beta |S| -zn + z|S| - z |S|}{n - |S|} \cdot \frac{n}{\beta n - \beta |S| - zn} \\
&= \frac{n|S|}{n - |S|}\text{.}
\end{align*}
This again gives a contradiction with \eqref{eq:sp_prices_upper_bound}, and completes the proof.
\end{proof}

\section{Restrained EJR and Pareto Optimality}\label{appx:restrained-ejr}

In this section we provide the proofs of theorems that concern satisfiability of Restrained EJR. We highlighted and discussed these results in \Cref{sec:base_ejr_vs_restrained_ejr}, and below we provide all the technical details.

\newtheorem*{thmrestrainedejrexists}{\Cref{thm:restrained-ejr-exists}}

\begin{thmrestrainedejrexists}
For each election there exists an outcome satisfying Restrained EJR.
\end{thmrestrainedejrexists}
\begin{proof}
Consider an election $E = (C, N, \feasibles, \mathcal{A})$. Let $k$ be the maximum size of feasible outcomes, $k = \max_{W \in \feasibles}|W|$, and let $n$ denote the original number of voters---in the course of the proof we will be removing some voters yet the value of $n$ will not change.

Let us construct a feasible outcome through the following greedy procedure. We look for the largest integer value $\ell \geq 0$ such that there exists a group of not-yet removed voters $S$ with $|S| \geq \ell \cdot \nicefrac{n}{k}$ who approves a feasible set $T \subseteq \bigcap_{i \in S} A_i$, $|T| = \ell$. We select the candidates from $T$, remove the voters from $S$, and construct a new family of feasible sets:
\begin{align*}
\feasibles' = \left\{ W \subseteq C  \colon W \cup T \in \feasibles \right\} \text{.}
\end{align*}
We repeat the procedure recursively, for the election $E' = (C, N \setminus S, \feasibles', \mathcal{A})$. We finish, when we removed all the voters. Note that since after each update we have that $\emptyset \in \feasibles$, all the voters will be removed at some point. 

Let us call the final outcome $W$. We will show that $W$ satisfies Restrained EJR. Towards a contradiction assume that this is not the case, and let $S$ be a blocking coalition. Let $y = \max_{i \in S} u_i(W) + 1$ and let $k' = \left\lfloor \frac{|S|}{n} k\right\rfloor$. In particular, we have that: 
\begin{align*}
 k' \leq \frac{|S|}{n} k  \qquad \text{hence} \qquad   |S| \geq \frac{k' n}{k} \text{.}
\end{align*}

By applying the definition of Restrained EJR for $\hat{W} = \emptyset$ we infer that $y \leq k'$. Consider the first round $r$ when some voter from $S$ was removed. Note that until round $r$ we selected at most the following number of candidates:
\begin{align*}
(n - |S|) \cdot \nicefrac{k}{n} \leq  \left(n - \frac{k'n}{k} \right) \cdot \nicefrac{k}{n} = k - k' \text{.} 
\end{align*}
Now we define $\hat{W}$ as a subset of $k - k'$ candidates from $W$ containing all candidates selected before round $r$. Since $S$ is a blocking coalition we get that there exists $W'$ with $|W'| \leq k'$ such that:
$T = \hat{W} \cup W' \in \feasibles$, and  $\left|\bigcap_{i\in S} A_i \cap T\right| \geq y$. But this means that $T$ must be also feasible in round $r$.
Moreover, $S \geq \frac{k' n}{k} \geq y \cdot \frac{n}{k}$. Thus, in round $r$ the value of $\ell$ is at least equal to $y$. 
Thus, the removed voter must had approved at least $y$ candidates in the set selected in round $r$, which contradicts the fact that $y = \max_{i \in S} u_i(W) + 1$. 
This completes the proof.
\end{proof}

\begin{definition}[Pareto Optimality]
  We say that a feasible outcome $W$ is \emph{Pareto dominated} by a feasible outcome $W'$ if for each voter $i \in N$ it holds that $u_i(W') \geq u_i(W)$ and for at least one voter $i \in N$ the inequality is strict, i.e, $u_i(W') > u_i(W)$. An outcome $W$ is \myemph{Pareto optimal} if it is not Pareto dominated by any feasible outcome. 
\end{definition}

\newtheorem*{thmrestrainedejrandpareto}{\Cref{thm:restrained-ejr-and-pareto}}

Let $u$ be a \emph{utility vector} that specifies for each voter $i \in N$ the number of candidates $u(i)$ the voter approves in the selected outcome. 
We say that $W\subseteq C$ \emph{induces} the utility vector $u$ if for each $i\in N$ we have $|A_i\cap W|=u(i)$. 
Given a utility vector $u$ we define the \emph{total utility} among $S\subseteq N$ as $u_{\mathrm{tot}}(S)=\sum_{i \in S}u(i)$ and the \emph{average utility} as $u_{\mathrm{av}} = \frac{u_{\mathrm{tot}}(N)}{n}$. We may shortcut $u_{\mathrm{tot}}=u_{\mathrm{tot}}(N)$.

Before giving a formal proof, we provide a short sketch.
We take all possible utility vectors that have average utility $p$---we call such utility vectors \emph{normal}.
We associate with each such a utility vector $u$ a pairwise disjoint feasible committee $W_u$.
In the next step, for each $W_u$, we design many feasible committees that together witness that $W_u$ violates EJR. Each such violation of EJR  is certified by a set of voters $S\subseteq N$.
Next, we need to prove that the outcomes towards which $S$ deviated also either fail EJR or are Pareto dominated. There are two possible cases depending on the size of $S$.
Either $|S|$ is small, in which case we prove that such an outcome is Pareto dominated by a solution induced by some normal utility vector.
Otherwise, the total utility of a corresponding committee might be higher compared to the total utility of a normal vector.
In that case, we perform another transformation that creates a new utility vector that dominates the former one by a slight increase in the utility of a single voter only.
However, this allows us to design a new feasible committee using new candidates, but where we limit the total utility of the voters from $N\setminus S$.
We repeat both transformations again on a new carefully chosen set $S'\subseteq N\setminus S$, even more decreasing the total utility of the voters from $N\setminus (S\cup S')$.
This is very helpful, as we can eventually show that there must be many voters with relatively small utility in $N\setminus (S\cup S')$.
Finally, we perform a few more transformations on those voters, which decrease the total utility of the committee below the threshold set by the total utility of a normal utility vector.
Hence, we conclude the proof as, in each case, we were able to find sequences of EJR violations and Pareto dominations (these sequences can be viewed as a tree, where utility vectors correspond to nodes, and the transformations to the edges), where the last committee in each sequence (a leaf in a tree) is again Pareto dominated by a committee associated with a normal vector.

\begin{thmrestrainedejrandpareto}
There exists no selection rule that satisfies Restrained EJR and Pareto Optimality.
\end{thmrestrainedejrandpareto}
\begin{proof}
Let us fix $p = 100$ ($p$ can be any sufficiently large number). We fix the number of voters $n$, and the size of the committee $k$:
\begin{align*}
n = k = \frac{p^2}{2} + \frac{5p}{2} \text{.}
\end{align*}

First, we describe a collection of utility vectors $\mathcal{N}_p$, denoted as \emph{normal}.
The collection $\mathcal{N}_p$ consists of all possible utility vectors such that $u(i) \leq k$ for all $i$ and such that $u_{\mathrm{av}}=p$. In other words, the total utility is $u_{\mathrm{tot}}=np=kp$. 
For each $u\in \mathcal{N}_p$ we are going to create one feasible committee $W_u$ that induces $u$ such that for each $u,v \in \mathcal{N}_p$, $W_u$ and $W_v$ are disjoint.

\paragraph{Construction of $W_u$.}\phantomsection\label{constr}
Given a utility vector $u$ and $\succ$ an order on $N$, we construct $W_u\in \mathcal{F}$ as follows.
We often say that voters are \emph{consecutive} if they are consecutive with respect to order ${\succ}$.
We introduce a set $W_u\subseteq C$ of $k$ new candidates that were not part of any feasible committee constructed until this point.
We assign the candidates to the voters via round-robin---we put an arbitrary cyclic order on $W_u$ (the first candidate in $W_u$ follows after the last one).
We process the voters according to the order $\succ$.
At each step, we add one candidate to the approval set of voter $i$ at hand and move to the next candidate according to the cyclic order.
When the voter $i$ already approves $u(i)$ candidates, we move to the next voter according to $\succ$. 

This construction ensures an important property that will be used later on:
\begin{description}
  \item[Property A:] For a group $S$ of consecutive voters (wrt.\ $\succ$) such that $u_{\mathrm{tot}}(S)=x$, 
    each candidate is approved by at most $\lceil \nicefrac{x}{k} \rceil$ and at least $\lfloor \nicefrac{x}{k} \rfloor$ voters from $S$. \phantomsection \label{prop:A}
\end{description}

Also, note that each candidate is approved roughly the same number of times. In particular, for a normal utility vector this means that each candidate is approved by exactly $p$ voters.

Using the \hyperref[constr]{consturction}, we create $W_u$ for each $u\in\mathcal{N}_p$ where $\succ$ is defined as an order on $u(i)$ (where ties are broken arbitrarily). In other words, we process the voters starting from those with the highest expected utility until the ones with the lowest value of $u(i)$.

For each utility vector $u\in\mathcal{N}_p$ (and associated $W_u\in\mathcal{F}$), we construct a set of feasible committees that ensure that for $W_u$ there is a group of voters that deviates in $W_u$ and therefore EJR is violated.
Assume that there exists a group of $x>0$ voters $S$ such that each of them has the utility lower than $x$.
For any utility vector $u$ and $S\subseteq N$, we describe an $x$-\emph{transformation} as follows.
We add $x$ brand new candidates that we set to be jointly approved by $S$. 
This set of $x$ candidates, call it $X$, together with any subset of $k - x$ candidates from $W_u$ are added to $\mathcal{F}$.
Thus, $W_u$ does not satisfy EJR, since the group $S$ deviates.
Indeed, for any $T\subseteq W_u$ of $|T|\le k-x$ there is a feasible committee $W_u^T=T\cup X$, which gives each voter $i \in S$ strictly higher utility than $u(i)$.

Having $u$, $\succ$, $W_u$, a consecutive set $S\subseteq N$, and $T\subseteq W_u$ (where $|T|=k-x$), as defined above, we describe another transformation applied on $W_u^T$. 
We call it a \emph{PD-transformation}.
Let $u_T$ be the utility vector that $W_u^T$ induces.
We create a new utility vector $u^+$ by increasing the utility of a least satisfied voter $v\notin S$ by one.
Then we create a feasible committee $W_{u^+}$ on brand new candidates using the \hyperref[constr]{construction}, where $\succ'$ is an order in which we first put the voters from $S$, followed by the voters $i \in N \setminus S$ ordered according to $u^+(i)$ (where ties are broken arbitrarily).
This transformation clearly ensures that $W_u^T$ is Pareto dominated by $W_{u^+}$.

Now, we split the proof into two cases.
In the first case, $u\in\mathcal{N}_p$ is such that there is a group $S$ of $x$ voters, for $x < p$, such that each voter from $S$ approves less than $x$ candidates.
We perform $x$-transformation of $u$, and we obtain several new feasible sets $W_u^T$ for all $T\subset W_u$ where $|T|=k-x$.
However, each such $W_u^T$ (and a utility vector $u^T$ it induces) is Pareto dominated by some normal utility vector:
\begin{align*}
u^T_{\mathrm{tot}} = np - \underbrace{x \cdot p}_{\text{removing $x$ candidates }} + \underbrace{x \cdot x}_{\text{adding $x$ candidates}} <np \text{.} 
\end{align*}

Now, consider the other case, that is, when there is no such group $S$.
We claim that there must be a group of $(p+1)$ consecutive voters such that each of them approves $p-1$ or $p$ candidates.
Indeed, if this were not the case, then the total utility would be at least:
\begin{align*}
 u_{\mathrm{tot}} \ge &\underbrace{1 + 2 + \cdots + (p-1)}_{\text{no group deviating for $x < p$}} + \underbrace{(p-1)\cdot (p-1)}_{\text{at most $p$ voters with the utility of $p-1$}} + (n - 2p+2)\cdot (p + 1) \\
&\qquad=  \frac{p(p-1)}{2} + p^2 - 2p + 1 + np + n - 2p^2 -2p + 2p + 2 \\
&\qquad = \frac{p(p-1)}{2} + np + n  - p^2 -2p + 3 \\
&\qquad = np + n  - \frac{p^2}{2} -\frac{5p}{2} + 3 > np\text{,}
\end{align*}
a contradiction.

We let $S$ be a set of $p+1$ consecutive voters such that each of them approves $p-1$ or $p$ candidates. We do a $(p+1)$-transformation of $S$, and for each resulting committee we do a PD-transformation.
Therefore, we obtain utility vector $u'$ that has the following structure:

\begin{enumerate}
\item The total utility equals:
\begin{align}
  u'_{\mathrm{tot}}= np - \underbrace{(p+1)p}_{\text{removing $p+1$ candidates}} + \underbrace{(p+1)(p+1)}_{\text{adding $p+1$ candidates}} + \underbrace{1}_{\text{PD-transformation}} = np + p + 2 \text{.}\label{eq:tot}
\end{align}
\item From \hyperref[prop:A]{Property A} it follows that the total utility within the group $S$ is at least:
\begin{align}
  u'_{\mathrm{tot}}(S) &\geq  \underbrace{|S| (p-1)}_{\text{utility before transformation}} -  \underbrace{(p+1) \left \lceil \frac{|S| p} {k} \right\rceil}_{\text{\hyperref[prop:A]{Property A}}}  + \underbrace{|S|(p+1)}_{\text{candidates added through the transformation}}\nonumber \\
                                &= 2|S|p - (p+1) \left\lceil \frac{(p+1) p}{\frac{p^2}{2} + \frac{5p}{2}} \right\rceil \nonumber\\
                                &\geq 2(p+1)p - 2(p+1) = 2(p+1)(p-1) = 2p^2 - 2 \label{eq:Sutility}\text{.}
\end{align}
\item The total utility among $S$ is at most:
\begin{align}
  u'_{\mathrm{tot}}(S) &\leq  \underbrace{|S| p}_{\text{utility before transformation}}  + \underbrace{|S|(p+1)}_{\text{candidates added through the transformation}} \nonumber \\
                       &  =2p^2+3p+1 \label{eq:Sutilitylb}\text{.}
\end{align}
\end{enumerate}
This means that the voters from $N \setminus S$ have the total utility at most:
\begin{align*}
z'_{\mathrm{tot}} = np + p + 2 - 2p^2 + 2 = np - 2p^2 + p + 4 \text{.}
\end{align*}

To finish the proof, we need to make sure that each such resulting utility vector will be Pareto dominated by a normal utility vector after few transformations.

Consider such a utility vector denoted as $u'$. 
If there exists a group $S'$ of size $x$, $2 \leq x \leq p - 2$, where each voter approves fewer than $x$ candidates, then we do an $x$-transformation on $S'$. The total utility after such a transformation is at most:
\begin{align*}
   u'_{\mathrm{tot}}=np + p + 2 - xp + x^2  < np \text{.}
\end{align*}
For the last inequality, note that the expression $- xp + x^2$ takes its minimum at $x = \nicefrac{p}{2}$, and so it suffices to check the value in the extreme points, $x = 2$ and $x = p-2$. Thus, in this case, each utility vector being the result of such an $x$-transformation is Pareto dominated by some normal vector.

Hence, we can assume there is no such a group. Then we claim there must exist a group $S'$ of $p$ consecutive voters such that each voter from $S'$ approves $p-2$ or $p-1$ candidates and $S\cap S'=\emptyset$. Indeed, if this were not the case, then the total utility among $N \setminus S$ would be at least:
\begin{align*}
  u'_{\mathrm{tot}}(N\setminus S)\geq &\underbrace{0 +  2 + 3 + \cdots + (p-2)}_{\text{no group deviating for $x$}} + (p-2)(p-2) + (\underbrace{n - (p+1)}_{|N \setminus S|} - (p-2) - (p-2)) p \\
&\qquad= \frac{(p-2)(p-1)}{2} - 1 + p^2 - 4p + 4 + (n -3p +3)p \\
&\qquad= \frac{(p-2)(p-1)}{2} + np -2p^2 -p +3  > np - 2p^2 + p + 4 = z'_{\mathrm{tot}} \text{.}
\end{align*}
The last inequality holds for each sufficiently large $p$, so we get a contradiction.

Thus, we may assume that such $S'$ exists and we perform a $p$-transformation on it and then a PD-transformation on each outcome 
(there $\succ''$ order is defined as follows: we first put the voters from $S$, next we put the voters from $S'$, followed by the voters $i \in N\setminus (S\cup S')$ ordered by $u'(i)$). 
After such transformations, there will be $2p+1$ voters (those from $S$ and $S'$) that will have at least the following total utility.
Note that as $S$ is consecutive in $u'$, we can use \hyperref[prop:A]{Property A} on $S$.
Let $u''$ be the resulting utility vector.
\begin{align*}
  u''_{\mathrm{tot}}(S\cup S')
  &\geq  \underbrace{|S'| (p-2)}_{\text{utility before transformation}} -  \underbrace{p \left \lceil \frac{|S'| (p-1)} {k} \right\rceil}_{\text{\hyperref[prop:A]{Property A} on $S'$}}  + \underbrace{|S'|p}_{\text{candidates added}} \\
   & \qquad\qquad+  \underbrace{2p^2-2}_{\text{utility of $S$~(\ref{eq:Sutility})}} -\underbrace{p \left \lceil \frac{2p^2+3p+1} {k} \right\rceil}_{\text{\hyperref[prop:A]{Property A} on $S$ + (\ref{eq:Sutilitylb})}} \\
                                &\geq p^2-2p -2p +p^2 +2p^2-2- 4p \\
   &= 4p^2 - 8p -2 \text{.} 
\end{align*}

It is easy to check that after such a transformation, the total number of approvals will increase by at most one: 
%
\begin{align}
  u''_{\mathrm{tot}} &\leq \underbrace{ np+p+2}_{\text{utility of $u'$ (\ref{eq:tot})}} - \underbrace{p \left\lfloor \frac{np+p+2}{k} \right\rfloor}_{\text{removing $p$ candidates}} + \underbrace{p^2}_{\text{adding $p$ candidates}} + \underbrace{1}_{\text{PD-transformation}}\nonumber \\
                     &\leq np + p + 3 \text{.} \label{eq:tot2}
\end{align}

Consequently, the total number of approvals among the voters from $N \setminus (S \cup S')$ is at most:
 \begin{align*}
z''_{\mathrm{tot}} = np + p + 3 -  4p^2 + 8p  + 2 = np  - 4p^2 + 9p + 5 \text{.}
\end{align*}

Now, we repeat the reasoning once again on the resulting utility vectors.
If there exists a group $S^*$ of size $x$, $2 \leq x \leq p - 2$, where each voter approves fewer than $x$ candidates, then we do an $x$-transformation and obtain utility vectors that are Pareto dominated by normal vectors. Otherwise, we will show that there exist at least two disjoint groups $S_1, S_2 \subseteq N \setminus (S \cup S')$ of size $p-1$ and $p$, respectively, that have fewer than  $p-1$ representatives. Indeed, if this were not the case, then the total utility among $N \setminus (S \cup S')$ would need to be at least:
\begin{align*}
&\underbrace{0 +  2 + 3+ \cdots + (p-2)}_{\text{no group deviating for $x$}} + p(p-2) + (\underbrace{n - (p+1) - p}_{| N \setminus (S \cup S')|} - (p-2) - p) (p-1) \\
&\qquad= \frac{(p-2)(p-1)}{2} - 1 + p^2 -2p + 2 + (n - 4p + 1)(p-1) \\
&\qquad= \frac{(p-2)(p-1)}{2} + np -3p^2 -p + 1 - n + 4p - 1 \\
&\qquad= \frac{(p-2)(p-1)}{2} + np -3p^2 +3p  -\frac{p^2}{2} - \frac{5p}{2} \\
&\qquad > np  - 4p^2 + 9p + 5 =z''_{\mathrm{tot}}\text{.}
\end{align*}
The last inequality holds for sufficiently large values of $p$ (one can compare only the coefficients for $p^2$). We get a contradiction, and so we have the two aforementioned groups $S_1$ and $S_2$.
We perform $(p-1)$-transformation on $S_1$ and, on the results, a PD-transformation (with $\succ$ set arbitrarily).
Either the total utility after such transformations is already lower than $np$ (and so it is Pareto dominated by a normal vector) or each candidate appears in at least $p$ approval sets. 
Moreover, observe that the total utility of $S_2$ only decreased (except possibly for one voter $v$).
Hence, we perform the last $(p-1)$-transformation on $S_2\setminus\{v\}$.
Therefore, the total utility of a utility vector $u^*$ induced by a committee created after all such transformations is at most: 
\begin{align*}
  u^*_{\mathrm{tot}} &\leq \underbrace{np + p + 3}_{u''_{\mathrm{tot}} (\ref{eq:tot2}) } - \underbrace{2(p-1)p}_{\ge \text{decrease for both $(p-1)$-transf.}} +\underbrace{2(p-1)(p-1)}_{\text{$\le$ increase for both $(p-1)$-transf.}} + \underbrace{1}_{\text{PD-transformation}}  \\
& = np + p + 4 -2p^2 + 2p + 2p^2 - 4p + 2 = np -p + 6 < np \text{.}
\end{align*}
Thus, at the end, all the utility vectors are Pareto dominated by some normal vectors. This completes the proof.
\end{proof}

\section{Conclusion}

We have considered a general model of social choice, where the structure of output is given through feasibility constraints. The feasibility constraints allow to encode different types of elections (e.g., single-winner and multi-winner elections, participatory budgeting, judgment aggregation, etc.). We have proposed a new technique of extending classic notions of proportionality to the general model of social choice with feasibility constraints. This way we have defined the axioms of justified representation in the model with constraints. Our technique also allows to extend other notions of fairness such as the core (see \Cref{sec:core}).

Our strongest notion of proportionality, fully justified representation, is always satisfiable, even for general monotone utility functions. We further show that natural adaptations of two committee election rules, Proportional Approval Voting and Phragm\'en's Sequential Method, satisfy strong notions of proportionality if and only if the feasibility constrains are matroids. Phragm\'en's Sequential Method additionally provides good approximation of some of our notions of fairness for non-matroid constraints. This makes the rule suitable for elections of different type and structure. We have also generalised the concept of stable-priceabiliy to the case of elections with constraints.

There are several pressing open questions. First, our work mainly focuses on approval ballots and corresponding dichotomous utility functions; many applications, however, require dealing with more generic utility functions. Specifically, we are interested in the following two questions:
\begin{inparaenum}[(1)]
\item Can we meaningfully define Phragm\'en's Sequential Method for additive utility functions, so that the rule preserves its most compelling properties?
\item Can we define Method of Equal Shares for elections with constraints? The answer to the second question seems challenging. The main difficulty lies in the fact that we do not know how to set the prices of the candidates. If they are set too high, then some groups of voters might not be able to afford to buy enough supported candidates. If we set them too low, then the groups might be left with money which cannot be used for buying candidates without breaking feasibility constrains.
\end{inparaenum}

It is further important to check how the considered rules perform on real and synthetic data.

The setting with weighted candidates also remains mostly unexplored. This setting is particularly important, since it models the increasingly popular process of participatory budgeting. Can we define an analogue of matroid constraints for weighted candidates? This seems plausible given that the exchange property seems to be naturally adaptable to weights. Can we design rules that satisfy the strong proportionality axioms for such constraints?

\subsection*{Acknowledgments}

We thank Yiheng Shen, Kamesh Munagala, Nikhil Chandak, Dominik Peters for the very helpful discussions and their invaluable remarks. 

Tomáš Masařík was supported by Polish National Science Centre SONATA-17 grant number 2021/43/D/ST6/03312. Grzegorz Pierczyński was supported by by Polish National Science Centre PRELUDIUM grant number UMO-2022/45/N/ST6/00271. Piotr Skowron is supported by the European Union (ERC, PRO-DEMOCRATIC, 101076570). Views and opinions expressed are however those of the author(s) only and do not necessarily reflect those of the European Union or the European Research Council. Neither the European Union nor the granting authority can be held responsible for them.
\begin{center}
  \includegraphics[width=3cm]{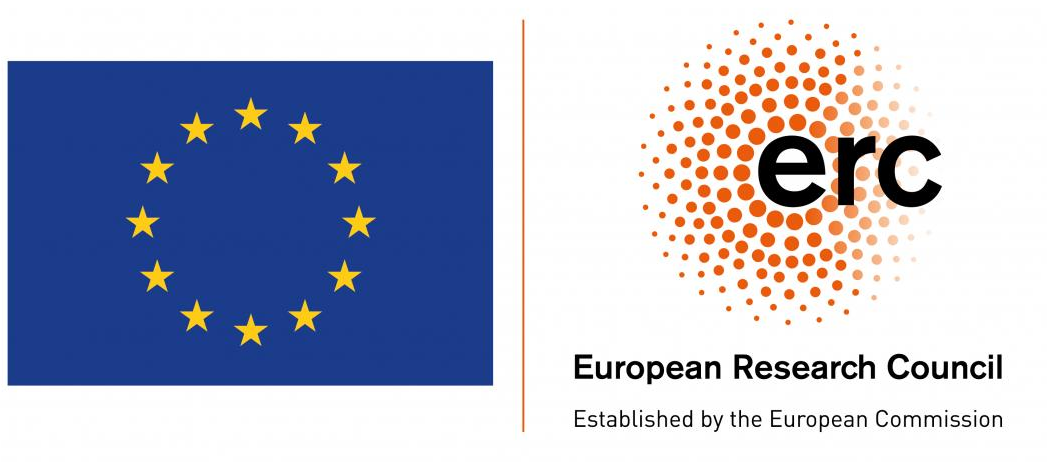}
\end{center}

\bibliographystyle{plainnat}
\bibliography{fairness}

\appendix

\section{Core in the General Model with Constraints}\label{sec:core}
Our approach also allows to extend other concepts of fairness to the model with constraints. Here we explain how to extend the concept of core.
We propose both variants of core to study. First, we define base variant of the core.

\begin{definition}[Base Core]\label{def:Basecore}
  Given an election $E = (C, N, \feasibles, \mathcal{A})$ we say that a group of voters $S \subseteq N$ is $(\alpha, \beta)$-\emph{cohesive}, $\alpha \in \naturals$, $\beta\colon S \to \reals^{+}$, if for each feasible set $T \in \feasibles$ either there exists $X \subseteq C$ such that $|X| = \alpha$, $T \cup X \in \feasibles$, and $u_i(X) \geq \beta(i)$ for each $i \in S$, or
\begin{align*}
\frac{|S|}{n} > \frac{\alpha}{|T| + \alpha} \text{.}
\end{align*}
We say that a feasible outcome $W \in \feasibles$ of an election $E = (C, N, \feasibles, \mathcal{A})$ is in the \myemph{base core} if for each $\alpha \in \naturals$, $\beta\colon S \to \reals^{+}$, and each  $(\alpha, \beta)$-cohesive group of voters $S\subseteq{}N$ there exists a voter $i \in S$ such that $u_i(W) \geq \beta(i)$.  \hfill $\lrcorner$
\end{definition}

Analogously we define the concept of core.

\begin{definition}[Core]\label{def:core}
  Given an election $E = (C, N, \feasibles, \mathcal{A})$ and a feasible outcome $W$ we say that a group of voters $S \subseteq N$ is $(\alpha, \beta)$-\emph{cohesive} in $W$, $\alpha \in \naturals$, $\beta\colon S \to \reals^{+}$, if for each subset $T \subseteq W$ either there exists $X \subseteq C$ such that $|X| = \alpha$, $T \cup X \in \feasibles$, and $u_i(X) \geq \beta(i)$ for each $i \in S$, or
\begin{align*}
 \frac{|S|}{n} > \frac{\alpha}{|T| + \alpha} \text{.}
\end{align*}
We say that a feasible outcome $W \in \feasibles$ of an election $E = (C, N, \feasibles, \mathcal{A})$ is in the \myemph{core} if for each $\alpha \in \naturals$, $\beta\colon S \to \reals^{+}$, and each group of voters $S\subseteq{}N$ that is  $(\alpha, \beta)$-cohesive in $W$ there exists a voter $i \in S$ such that $u_i(W) \geq \beta(i)$.  \hfill $\lrcorner$
\end{definition}

Our definition of the core clearly implies the definition of FJR and corresponds to the definition of the core for committee election rules.

\section{Computational Social Choice Models and Matroids}\label{sec:matroid}
In this section we prove that the examples of matroid feasibility constraints provided in \Cref{sec:feasibility_constraints} satisfy \ref{cond:exchange-property}, and the examples of non-matroid feasibility constraints violate it.

\begin{description}
    \item[Committee elections.] Consider two feasible sets $X, Y$ such that $|X| < |Y|$. It is clear that $|Y| \leq k$ and for each $c\in Y \setminus X$ we have that $|X\cup\{c\}| \leq k$, hence by the definition of $\feasibles$,  $X\cup\{c\}\in \feasibles$, which shows that $\feasibles$ satisfies condition \ref{cond:exchange-property}.
    \item[Public decisions.] Consider two feasible sets $X, Y$ such that $|X| < |Y|$. Then for at least one binary issue $C_r$ ($r\in [z])$, we have that $X\cap C_r = \emptyset$ and $Y\cap C_r \neq \emptyset$ (hence, $|Y\cap C_r| = 1$). Then after adding the candidate from $Y\cap C_r$ to $X$, $X$ is still feasible, which shows that $\feasibles$ satisfies condition \ref{cond:exchange-property}.
    \item[Committee elections with disjoint attributes.]  Consider two feasible sets $X, Y$ such that $|X| < |Y|$. If for some attribute $r$ we have that $|X\cap C_r| < \lquota{r}$ and $(Y\setminus X)\cap C_r \neq \emptyset$, then after adding any candidate from the latter set to $X$, $X$ is still feasible. Suppose now that it is not the case, i.e., $(*)$ $(Y\setminus X)\cap C_r = \emptyset$ for each $r$ such that $|X\cap C_r| < \lquota{r}$. Naturally, from the construction of $\feasibles$, we have also that $|X\cap C_r| = |Y\cap C_r|$ for each $r$ such that $|X\cap C_r| = \uquota{r}$. Therefore, since $|X| < |Y|$, there need to exist attribute $r$ such that $\lquota{r} \leq |X \cap C_r| < |Y \cap C_r| < \uquota{r}$. Consider now any candidate $c\in (Y\setminus X) \cap C_r$. The set $X\cup\{c\}$ does not violate any upper quotas and is still possible to be completed to a $k$-sized set so that all lower quotas are satisfied (because $Y$ is possible to be completed in such a way, $X\cup \{c\}$ has at most the same size as $Y$, and from $(*)$ the number of seats required to satisfy all the lower quotas is no greater in $X\cup\{c\}$ than in $Y$), hence it is feasible and \ref{cond:exchange-property} is satisfied.
    \item[Ranking candidates.] Suppose that we need to elect a ranking among $3$ candidates $c_1, c_2, c_3$. Consider set $X=\{c_{1, 2}, c_{2, 3}\}$ and set $Y=\{c_{3, 2}, c_{2, 1}, c_{3, 1}\}$. Then it is clear that no candidate from $Y$ can be added to $X$ so that $X$ still represents a valid ranking, hence \ref{cond:exchange-property} is violated.
    \item[Committee elections with negative votes.] Consider set $X$ containing some $k$ real candidates $c_1, c_2, \ldots, c_k$ and a $(k+1)$-sized set $Y=\{\bar{c}_1, c_2, c_3, \ldots, c_k, c_{k+1}\}$. Now $Y\setminus X = \{\bar{c}_1, c_{k+1}\}$. None of them can be added to $X$ without breaking feasibility constraints---adding $\bar{c}_1$ would mean that $c_1$ is both elected and unelected, and adding $c_{k+1}$ would mean that more than $k$ candidates are elected. Hence, \ref{cond:exchange-property} is violated.
    \item[Judgement aggregation.] Consider two variables $x$ and $y$. As described in \Cref{sec:feasibility_constraints}, we introduce four candidates $c_{x, T}$, $c_{x, F}$, $c_{y, T}$, $c_{y, F}$. Now suppose that we require that the following formula holds: $x \implies \neg y$. Consider set $X=\{c_{x, T}\}$ and set $Y=\{c_{x, F}, c_{y, T}\}$. Then it is clear that no candidate from $Y$ can be added to $X$ so that $X$ is still feasible, hence \ref{cond:exchange-property} is violated.
\end{description}

\end{document}